\documentclass[a4paper]{article}
\usepackage[utf8]{inputenc}
\usepackage[a4paper, total={16cm, 22cm}]{geometry}

\usepackage[affil-it]{authblk}

\usepackage{amsmath,amsthm,amssymb,amsfonts,stmaryrd,tensor,scalerel}
\usepackage{graphicx}
\usepackage{xcolor}
\usepackage[all,cmtip]{xy}
\usepackage{placeins}
\usepackage{array}
\usepackage{multirow}

\usepackage{pgf}
\usepackage[absolute, overlay]{textpos}

\setlength{\TPHorizModule}{1.0 pt}
\textblockorigin{\paperwidth}{0.0 pt}

\newtheorem{proposition}{Proposition}[section]
\newtheorem{corollary}[proposition]{Corollary}

\theoremstyle{plain}
\newtheorem{remark}{Remark}[section]
\theoremstyle{definition}
\newtheorem{definition}{Definition}[section]

\newcommand{\de}{\,\mathrm{d}}
\newcommand{\adj}{\mathrm{adj}}

\newcommand{\Hess}{\textnormal{Hess}}
\newcommand{\hess}{\det\textnormal{Hess}}

\newcommand{\rn}[1]{{\mathbb{R}^#1}}

\newcommand{\cotb}[1]{T^*\mathbb{R}^#1}
\newcommand{\Tr}{\textnormal{Tr\,}}

\newcommand{\MA}{Monge-Ampère }
\newcommand{\LR}{Lychagin-Rubtsov }
\newcommand{\CS}{Chynoweth-Sewell }

\newcommand{\jet}[1]{J^1\mathbb{R}^#1}

\newcommand{\PV}{q_g}

\newcommand{\TTT}[2]{T^{(#1)}_{#2}}

\newcommand{\CVL}[1]{CV_{#1}^L}

\title{Solutions and Singularities of the Semigeostrophic Equations via the Geometry of Lagrangian Submanifolds}
\author{R. D'Onofrio${}^{1,2,3}$, G. Ortenzi${}^{1,2}$, I. Roulstone${}^3$ and V. Rubtsov${}^{4,5}$  }
\date{\today}

\affil{{\small  $^1$Dipartimento di Matematica e Applicazioni, Universit\`a di  Milano-Bicocca, \\ Via Roberto Cozzi 55, I-20125 Milano, 
Italy
}\\
{\small  r.donofrio1@campus.unimib.it, giovanni.ortenzi@unimib.it}
\medskip\\
{\small ${}^2$
INFN, Sezione di Milano-Bicocca, Piazza della Scienza 3, 20126 Milano, Italy}
\medskip\\
{\small  $^3$Department of Mathematics, University of Surrey, Guildford GU2 7XH, UK
}\\
{\small  r.d'onofrio@surrey.ac.uk, i.roulstone@surrey.ac.uk}
\medskip\\
{\small ${}^4$Univ.Angers, CNRS, LAREMA, SFR MATHSTIC, F-49000 Angers, France}
\\
{\small volodya@univ-angers.fr}
\medskip\\
{\small ${}^{5}$IGAP (Institute for Geometry and Physics), Trieste, Italy}

}

\begin{document}

\maketitle

\pgfmathwidth{"top right corner at page \thepage"}
\begin{textblock}{\pgfmathresult}[1, 0](0, 1)
\noindent
{\fontfamily{cmr}\selectfont
DMUS-MP-22/19}
\end{textblock}

\begin{abstract}
    Using \MA geometry, we study the singular structure of a class of nonlinear \MA equations in three dimensions, arising in geophysical fluid dynamics. We extend seminal earlier work on \MA geometry by examining the role of an induced metric on Lagrangian submanifolds of the cotangent bundle. In particular, we show that the signature of the metric serves as a classification of the \MA equation,  while singularities and elliptic-hyperbolic transitions are revealed by degeneracies of the metric. The theory is illustrated by application to an example solution of the semigeostrophic equations.
\end{abstract}

\section{Introduction}


Atmospheric fronts are a salient feature of mid-latitude weather systems. From the viewpoint of mathematical modelling, fronts are understood as material interfaces, advected by the fluid flow, across which the physical features undergo a jump discontinuity.
One of the most successful approaches to mathematical modelling of weather fronts is Hoskins' semigeostrophic (SG) equations \cite{CP84,HB72,Hos75}.
In \cite{CS89}, Chynoweth and Sewell recognised the presence of a Legendre duality structure between four different sets of variables in SG theory akin to the classical quartet of dual potentials in thermodynamics. In the same paper, Chynoweth and Sewell showed how singularities of the Legendre mapping could be used to model flows containing a weather front. Their approach is reminiscent of the studies of shock waves in stationary gas flows, where it is known under the name of hodograph transformation (see for example Chapter 12 of \cite{LL87}).

Semigeostrophic flows are completely described by a single function called the geopotential. Moreover, they conserve a form of Ertel's potential vorticity, which in turn is related to the geopotential by a \MA type equation. Denoting the geopotential by $P(x,y,z,t)$ and the potential vorticity by $\PV(x,y,z,t)$, we may write this relation as
\begin{equation}
\label{CS P matrix}
    \begin{vmatrix}
    P_{xx} & P_{xy} & P_{xz}\\
    P_{yx} & P_{yy} & P_{yz}\\
    P_{zx} & P_{zy} & P_{zz}
    \end{vmatrix}
    =C\PV,
\end{equation}
where $C$ is a constant depending on the physical parameters of the model, and subscripts denote partial differentiation (see for example \cite{SC87}).
Time dependence is implicit in (\ref{CS P matrix}) as no time derivatives are involved. Therefore, (\ref{CS P matrix}) represents a kinematic constraint between the geopotential and the potential vorticity, and can be studied by assuming time as a fixed parameter. This approach was also used in \cite{CS89}, where the authors provide several examples of couples $(P,\PV)$ satisfying (the 2D version of) (\ref{CS P matrix}) and capable of modelling atmospheric fronts frozen in time. Both the kinematic and the dynamic view are taken in \cite{HS90}.

The kinematic approach to singularities can be studied using the geometrical framework of \MA equations pioneered by Lychagin and his school (see for example \cite{KLR2006}).
Delahaies and Roulstone \cite{DR2010} have explored the implications \MA geometry for the shallow water version of the SG equations, while Roulstone \textit{et al.} \cite{RBG2009} and Banos \textit{et al.} \cite{BRR2016} have conducted similar studies for the inviscid Navier-Stokes (Euler) equations.
This paper follows this line of research, and investigates the relevance of \MA geometry to the study of singularities of the incompressible 3D SG equations.
One of the main advantages of the geometric approach to \MA equations is a clear and intuitive understanding of  the notion of a generalized solution. While a classical solution is a function $P(x,y,z,t)$, a generalized solution is a Lagrangian submanifold $L$ in the cotangent bundle $\cotb{3}$ (the phase space) of the physical space $\rn{3}$. One thinks of the manifold $L$ as the multivalued graph of the gradient of the geopotential, understood as a map $\nabla P:\rn{3}\to\rn{3}$, in $\cotb{3}\simeq \rn{3}\times\rn{3}$. To recover physical information, a generalized solution $L$ must be projected onto the physical space and singularities can arise in the process. This geometrical perspective on solutions and their singularities was first introduced by Vinogradov and Kupershmidt in their work on Hamilton-Jacobi theory (§8 of \cite{VK77}). Kossowski \cite{Kos91} has independently proposed the same formalism for studying singularities of symplectic \MA equations in two independent variables.
A similar viewpoint is adopted by Ishikawa and Machida \cite{IM2006a,IM2006b} for classifying generic singularities of Hessian type \MA equations in two and three independent variables.
We refer to \cite{Lyc85} for further examples of application of this formalism to more general nonlinear PDEs.

In this work, we present an alternative approach to singularities based on pseudo-Riemannian geometry. For classifying symplectic \MA equations in three independent variables, Lychagin and Rubtsov \cite{LR83} introduced a metric tensor $g_\omega$ on $\cotb{3}$ (formula (\ref{LR metric 3d}) below) and showed that its signature distinguishes the various classes -- elliptic, hyperbolic, and parabolic (see also \cite{Ban2002}). For the particular case of equation (\ref{CS P matrix}), this metric has signature (3,3), and gives $\cotb{3}$ the structure of a pseudo-Riemannian manifold. 

Every generalized solution $L\subset\cotb{3}$ inherits a metric structure from the ambient space associated with the pull-back metric $h_\omega:=g_\omega|_L$ and this constitutes the main focus of the present work. We study the properties of singular solutions to (\ref{CS P matrix}) through pseudo-Riemannian geometry of Lagrangian submanifolds.
Our main goal is to provide an understanding of the pull-back \LR metric $h_\omega$ from the viewpoint of PDE theory. We claim that $h_\omega$ metric plays the same role for \MA equations as the coefficient matrix does for linear second order PDEs. This means that it can be used to define the symbol type of the underlying \MA equation, and, in hyperbolic regime, to construct the characteristic surfaces. In our general setting, the \MA equations under consideration may be of mixed type. We show that $h_\omega$ is Riemannian on elliptic branches of $L$ and Lorentzian on hyperbolic ones. Moreover, we prove that elliptic-hyperbolic transitions and kinematic singularities coincide for equation (\ref{CS P matrix}), implying that $h_\omega$ degenerates on the singular locus of $L$. In this sense, we claim that the pull-back metric is a diagnostic tool for studying singularities.

In Section \ref{sec background}, we give some background on SG equations and \MA geometry. We state and prove our results about the pull-back of the \LR metric and the symbol type of the \MA equation in Section \ref{sec: symbol}. Finally, we present an explicit example of a solution to the SG equations, illustrating the aforementioned results, in Section \ref{sec:examples}.

\section{Background and methods}
\label{sec background}
In this review section we recall some basics about the semigeostrophic system and the geometry of \MA equations.
\subsection{Semigeostrophic equations}
Hoskins' semigeostrophic equations \cite{Hos75} are an approximation to the Euler system of fluid dynamics intended to model large scale motion of the atmosphere. They are usually written
\begin{equation}
\label{SG system}
    \begin{cases}
    \displaystyle{\frac{D u_g}{D t}-fv+\frac{\partial \phi}{\partial x}=0,\quad \frac{D v_g}{D t}+fu+\frac{\partial \phi}{\partial y}=0,\quad \frac{g\theta}{\theta_0}=\frac{\partial \phi}{\partial z},} & (momentum)\\
   \displaystyle{ \frac{\partial u}{\partial x}+\frac{\partial v}{\partial y}+\frac{\partial w}{\partial z}=0,} & (mass)\\
    \displaystyle{\frac{D\theta}{D t}=0,} & (energy)
    \end{cases}
\end{equation}
where
\begin{equation}
\label{material derivative}
    \frac{D}{Dt}=\frac{\partial}{\partial t}+u\frac{\partial}{\partial x}+v\frac{\partial}{\partial y}+w\frac{\partial}{\partial z},
\end{equation}
is the material time derivative, and $\{x,y,z\}$ form a Cartesian coordinate system with $y$ directed pole-ward and $z$ directed vertically. The unknowns are the fluid velocity field $(u,v,w)$, the geopotential $\phi$, and the potential temperature $\theta$ (see \cite{CS89} for a detailed definition of these variables). The positive constants $f\approx 10^{-4}$Hz and $g\approx 10\textnormal{m/s}^2$ account for the effects of Earth's rotation and gravity, while $\theta_0$ is a reference value for $\theta$. Further, there are two main approximations at work in (\ref{SG system}). First, hydrostatic balance is assumed, which results in neglecting the vertical acceleration term in the momentum balance. Second, the fluid flow is supposed to be close to geostrophic equilibrium (see \cite{Hos75} , $\S$3): this is accounted for in (\ref{SG system}) through Hoskins' ``geostrophic momentum approximation'', which replaces the fluid velocity in the horizontal acceleration terms with its geostrophic part,
\begin{equation}
\label{geostrophic wind}
    u_g:=-\frac{1}{f}\frac{\partial\phi}{\partial y},\qquad v_g:=\frac{1}{f}\frac{\partial\phi}{\partial x}.
\end{equation}
System (\ref{SG system}) implies an important conservation property, namely, the conservation of Ertel's potential vorticity along fluid trajectories,
\begin{equation}
    \frac{D \PV}{D t}=0.
\end{equation}
This quantity represents the projection of the absolute geostrophic vorticity,
\begin{equation}
\label{absolute vorticity}
    \zeta_g=
    \left(-\frac{\partial v_g}{\partial z}+\frac{1}{f}\frac{\partial(u_g,v_g)}{\partial(y,z)},\frac{\partial u_g}{\partial z}+\frac{1}{f}\frac{\partial(u_g,v_g)}{\partial(y,z)},f+\frac{\partial v_g}{\partial x}-\frac{\partial u_g}{\partial y}+\frac{1}{f}\frac{\partial(u_g,v_g)}{\partial(x,y)}\right),
\end{equation}
along the gradient of the potential temperature,
\begin{equation}
\label{PV}
    \PV:=\zeta_g\cdot\nabla \theta.
\end{equation}
Notice that equation (\ref{PV}) plus (\ref{geostrophic wind}) and the hydrostatic balance condition (\ref{SG system}) provides a direct link between the potential vorticity and the geopotential, which comes in the form of a \MA equation. This statement is made more clear by introducing the modified geopotential,
\begin{equation}
\label{modified geopotential P}
    P:=\frac{\phi}{f^2}+\frac{x^2}{2}+\frac{y^2}{2},
\end{equation}
which allows one to write equation (\ref{PV}) as
\begin{equation}
\label{CS P dimensional}
    \PV=\frac{f^3\theta_0}{g}\hess(P),
\end{equation}
where $\hess(P)$ denotes the determinant of the Hessian matrix of $P$ with respect to the spatial variables. In this work, we are exclusively interested in kinematic aspects of system (\ref{SG system}) as encoded in the \MA equation (\ref{CS P dimensional}). 

The mathematical structure of the semigeostrophic equations appears even more clearly on introducing dimensionless variables. Following \cite{Oli2006}, we write the semigeostrophic system in dimensionless form as
\begin{gather}
\label{SG nondimensional}
\begin{split}
    \epsilon\frac{D u_g}{D t}-v+\frac{\partial\phi}{\partial x}=0,\qquad \epsilon\frac{D v_g}{D t}+u+\frac{\partial \phi}{\partial y}=0\qquad \frac{D\theta}{D t}=0,\\
    \frac{\partial u}{\partial x}+\frac{\partial v}{\partial y}+\frac{\partial w}{\partial z}=0,\qquad u_g=-\frac{\partial \phi}{\partial y},\qquad v_g=\frac{\partial \phi}{\partial x},\qquad \theta=\frac{\partial \phi}{\partial z}.
\end{split}
\end{gather}
The dimensionless parameter $\epsilon$ is the Rossby number,
\begin{equation}
    \epsilon=\frac{U}{fL},
\end{equation}
which involves the typical horizontal length and velocity scales, and represents the ratio between inertial and Coriolis forces. The value of $\epsilon$ is typically $\approx 0.1$ in semigeostrophic flows. 
A consistent dimensionless expression for the modified geopotential is
\begin{equation}
    P=\frac{x^2}{2}+\frac{y^2}{2}+\epsilon\phi,
\end{equation}
and the dimensionless version of (\ref{CS P dimensional}) reads accordingly
\begin{equation}
\label{CS P}
    \hess(P)=\epsilon \PV.
\end{equation}

\begin{remark}
\label{rem:positive PV}
Shutts and Cullen \cite{SC87} used equation (\ref{CS P}) to study stability of semigeostrophic flows with respect to small displacement of fluid parcels. They found that a necessary condition for parcel stability is the (spatial) convexity of the geopotential $P$, which implies strict positivity of the potential vorticity.
Although we place no a priori hypothesis on the convexity of $P$, we shall always assume $\PV>0$ henceforth.
\end{remark}

Much of the current interest in the semigeostrophic equations is motivated by a change of variable due to Hoskins \cite{Hos75}, known as the ``geostrophic momentum transformation'', which has drastically improved the general comprehension of semigeostrophic flows. As later recognised by Chynoweth and Sewell \cite{CS89}, this change of variable may be interpreted as a Legendre type transformation, and we adopt their perspective for describing it. We start by introducing the (dimensionless) horizontal components of the absolute momentum,
\begin{equation}
    M=x+\epsilon v_g=x+\epsilon\frac{\partial\phi}{\partial x},\qquad N=y-\epsilon u_g=y+\epsilon \frac{\partial\phi}{\partial y},
\end{equation}
which stands in a special relation with the geopotential,
\begin{equation}
\label{M N P}
    \frac{\partial P}{\partial x}=M,\qquad \frac{\partial P}{\partial y}=N.
\end{equation}
Moreover, the hydrostatic balance condition is written in terms of $P$ as
\begin{equation}
\label{theta P}
    \frac{\partial P}{\partial z}=\epsilon\frac{\partial \phi}{\partial z}=\epsilon\theta.
\end{equation}
Equations (\ref{M N P}) and (\ref{theta P}) are the starting point for \cite{CS89}, where a quartet of Legendre transformations of $P$ is identified. In fact, these relations open the way to the geometrization of the kinematic equation (\ref{CS P}). For the remainder of this section, we will frame the work of Chynoweth and Sewell within the geometrical theory of \MA equations (see for example \cite{KLR2006}).
For convenience of exposition, we will use the notation,
\begin{equation}
\label{dual variables}
    (X,Y,Z):=(M,N,\epsilon\theta)=\nabla P.
\end{equation}

\subsection{\MA structure and Legendre duality}
In this work, we only deal with symplectic \MA equations in three independent variables. Here, the word \textit{symplectic} means that the equation's coefficients can only depend on the independent variables and the gradient of the dependent variable, but not on the value itself of the dependent variable. This is true for equation (\ref{CS P}) if we understand the potential vorticity $\PV$ as a function of space (and possibly time). As the name suggests, symplectic \MA equations are associated with the symplectic geometry of a manifold, the phase space, representing the cotangent bundle to the space of independent variables. In the particular case of equation (\ref{CS P}), the associated symplectic manifold is $\cotb{3}$, which we endow with coordinates $(x,y,z,X,Y,Z)$ and the canonical symplectic form
\begin{equation}
\label{canonical symplectic}
    \Omega=\de X\wedge\de x+\de Y\wedge\de y+\de Z\wedge\de z.
\end{equation}
A function $P:\rn{3}\to \mathbb{R}$ induces a section of the cotangent bundle through its differential, $dP:\rn{3}\to\cotb{3}$. Note that the image of $dP$ in $\cotb{3}\simeq \rn{3}\times\rn{3}$ coincides with the graph of the gradient $\nabla P:\rn{3}\to\rn{3}$. For any given 3-form $\omega$ on $\cotb{3}$, we can define a map
\begin{equation}
    \Delta_\omega:C^\infty(\rn{3})\to\Omega^3(\rn{3}),
\end{equation}
taking functions to 3-forms on $\rn{3}$, which we call the \MA operator associated to $\omega$. It associates a function with the restriction of $\omega$ to its graph,
\begin{equation}
    \Delta_\omega(P):=(dP)^*\omega,
\end{equation}
where the superscript ${}^*$ denotes the pull-back.
The correspondence between \MA operators and 3-forms is not 1-to-1 (several forms produce the same operator), but it can be made so by taking a suitable quotient of the space of 3-forms on $\cotb{3}$.
Not every 3-form produces a nonzero \MA operator and those which do are called \textit{effective}. This induces an equivalence relation on the space of 3-forms on $\cotb{3}$ -- two forms are equivalent if they differ by a non-effective form. Thus, an equivalence class of 3-forms gives rise to one and the same \MA operator. We will exclusively deal with effective forms and shall make no explicit distinction between an effective form and the class it represents. In the symplectic case, effective 3-forms can be characterized as those $\omega\in\Omega^3(\cotb{3})$ which satisfy $\omega\wedge\Omega=0$. Its easy to verify that the effective 3-form on $\cotb{3}$ associated to equation (\ref{CS P}) is
\begin{equation}
\label{effective form}
    \omega=\de X\wedge\de Y\wedge \de Z-\epsilon\PV\de x\wedge\de y\wedge\de z.
\end{equation}

The equation $\Delta_\omega=0$ is called the \MA equation associated to $\omega$, and we denote it by $E_\omega$. The \MA equation corresponding to (\ref{effective form}) is, in Cartesian coordinates, (\ref{CS P}), as can be immediately verified by direct calculation.
The next step in the definition of this geometrical framework is the notion of a solution.
A \textit{generalized} solution to $E_\omega$ is a smooth Lagrangian submanifold\footnote{A Lagrangian submanifold $L$ in a symplectic manifold $(M^{2n},\Omega)$ is an isotropic submanifold ($\Omega|_L=0$) of the maximum possible dimension ($\textnormal{dim}(L)=n$).} $L\in\cotb{3}$ such that $\omega|_L=0$. A \textit{classical} solution is one which is globally represented as the graph of a function, meaning that $L$ is the range of the differential $dP$ for some twice differentiable function $P(x,y,z)$. Note that in this latter case the condition $\omega|_L=0$ reads
\begin{equation}
\label{CS P geometric}
    \Delta_\omega(P)=0.
\end{equation}
Generalized solutions are precisely those which can be locally (but not globally) represented as the graph of a function near generic points. 
Stated differently, the mapping $\pi_L:=\pi|_L$, where
\begin{equation}
\label{Lagrangian projection L}
    \pi:\cotb{3}\to \mathbb{R}^3,\qquad (x,y,z,X,Y,Z)\mapsto (x,y,z),
\end{equation}
represents the canonical bundle projection, is a diffeomorphism between $L$ and $\rn{3}$ as long as $L$ is the graph of a classical solution. Otherwise, we must distinguish between \textit{regular} points, where $\de \pi_L$ has maximal rank, and \textit{singular} points, where it has not. In this latter case, the diffeomorphism property is local, and only holds near regular points.

Although generalized solutions cannot be globally represented as the graph of a function, they admit an alternative local representation in terms of a single function near any point (either regular or singular). Note that
\begin{equation}
\label{tautological 1-form}
    \Omega=\de\alpha,
\end{equation}
where $\alpha:=X\de x+Y\de y+Z\de z$ is the tautological 1-form on $\cotb{3}$. If $L$ is a Lagrangian submanifold, then $\alpha|_L$ is closed, and so there exist a function $f$ on $L$ such that, locally,
\begin{equation}
\label{generating function f}
    \alpha|_L=df.
\end{equation}
We call $f$ a generating function for the Lagrangian submanifold $L$. Once a coordinate set on $L$ is selected, (\ref{generating function f}) reduces to an algebraic system of equations whose zero set in $\cotb{3}$ identifies the submanifold $L$. In the neighbourhood of any given point, coordinates on $L$ can always be chosen as a suitable 3-subset of the cotangent coordinates. Overall, there are $2^3=8$ possible choices to pick a 3-subset from $\{x,y,z,X,Y,Z\}$ and, therefore, as many classes of generating functions.
The Legendre dual potentials of Chynoweth and Sewell \cite{CS89}, 
\begin{equation}
\label{CS potentials}
    R(X,Y,Z), \qquad S(X,Y,z), \qquad T(x,y,Z),
\end{equation}
provide some physically relevant examples of generating functions. We explicitly work out the description of a Lagrangian submanifold $L$ in terms of $S$. Setting
\begin{equation}
    f(X,Y,z)=Xx+Yy-S(X,Y,z)
\end{equation}
in equation (\ref{generating function f}), gives,
\begin{equation}
    Z\de z=x\de X+y\de Y-\frac{\partial S}{\partial X}\de X-\frac{\partial S}{\partial Y}\de Y-\frac{\partial S}{\partial z}\de z,
\end{equation}
which in turn implies
\begin{equation}
\label{L S}
    x=\frac{\partial S}{\partial X}(X,Y,z),\qquad y=\frac{\partial S}{\partial Y}(X,Y,z), \qquad Z=-\frac{\partial S}{\partial z}(X,Y,z).
\end{equation}
The combined zero set of equations (\ref{L S}) in $\cotb{3}$ identifies the Lagrangian submanifold $L$ generated by $S$. Similarly, the choices
\begin{equation}
    f(X,Y,Z)=Xx+Yy+Zz-R(X,Y,Z)
\end{equation}
and
\begin{equation}
    f(x,y,Z)=Zz+T(x,y,Z)
\end{equation}
lead to a local description of as many classes of Lagrangian submanifolds as the zero set of, respectively,
\begin{equation}
\label{L R}
    x=\frac{\partial R}{\partial X}(X,Y,Z),\qquad y=\frac{\partial R}{\partial N}(X,Y,Z), \qquad z=\frac{\partial R}{\partial Z}(X,Y,Z),
\end{equation}
and
\begin{equation}
\label{L T}
    z=-\frac{\partial T}{\partial Z}(x,y,Z),\qquad X=\frac{\partial T}{\partial x}(x,y,Z),\qquad Y=\frac{\partial T}{\partial y}(x,y,Z).
\end{equation}
\begin{remark}
In \cite{CS89}, the term ``dual space'' is used to refer to the Cartesian space of coordinates $(X,Y,z)$, $(x,y,Z)$ or $(X,Y,Z)$. The geometrical setting brings out the true nature of the dual space as the local coordinate representation of a Lagrangian submanifold.
\end{remark}
The condition that $L$ is a solution to the \MA equation results in a condition on the generating function itself which again takes the form of a \MA equation.
For the Chynoweth--Sewell potentials (\ref{CS potentials}), this condition respectively reads
\begin{align}
\label{CS R}
    &\hess(R)=\frac{1}{\epsilon q_g},\\
    \label{CS S}
    &\epsilon\PV\left[\frac{\partial^2S}{\partial X^2}\frac{\partial^2S}{\partial Y^2}-\left(\frac{\partial^2S}{\partial X\partial Y}\right)^2\right]+\frac{\partial^2S}{\partial z^2}=0,\\
    \label{CS T}
    &\frac{\partial^2T}{\partial x^2}\frac{\partial^2T}{\partial y^2}-\left(\frac{\partial^2T}{\partial x\partial y}\right)^2+\epsilon\PV\frac{\partial^2T}{\partial Z^2}=0.
\end{align}
We close this section by noting that $R(X,Y,Z)$ plays a distinguished role among the Chynoweth-Sewell potentials. Indeed, if $(x,y,z)$ are entirely replaced by $(X,Y,Z)$ in the role of independent variables, the whole semigeostrophic system (\ref{SG system}) reduces to just two equations (see for example \cite{RN2006}), and comprises the \MA equation (\ref{CS R}) and a transport equation for the potential vorticity,
\begin{equation}
\label{SG dual variables}
    \begin{cases}
    \displaystyle{\epsilon\PV\hess(R)=1},\\
    \displaystyle{\frac{\partial \PV}{\partial t}=\frac{\partial(q_g,\Psi)}{\partial(X,Y)}},\\
    \displaystyle{\Psi:=\frac{X^2}{2}+\frac{Y^2}{2}-R}.
    \end{cases}
\end{equation}
System (\ref{SG dual variables}) provides a clear distinction between the model's kinematics, encoded in the \MA equation, and its dynamics, represented by the transport of vorticity.

\section{Pseudo-Riemannian geometry and classification of nonlinear PDEs}
\label{sec: symbol}

To classify\footnote{Two effective 3-forms $\omega_1$ and $\omega_2$ on $\cotb{3}$ are locally equivalent if there is a local symplectomorphism of the phase space pulling $\omega_2$ back to $\omega_1$.} symplectic 3-dimensional \MA operators, Lychagin and Rubtsov \cite{LR83} introduced a symplectic invariant attached to any given effective 3-form on $\cotb{3}$. It may be defined through the relation (see \cite{BRR2016})
\begin{equation}
\label{LR metric 3d}
    g_\omega(\xi_1,\xi_2)\frac{\Omega^3}{3!}=\iota_{\xi_1} \omega\wedge \iota_{\xi_2}\omega\wedge \Omega,
\end{equation}
which holds for each pair of vector fields $\xi_1,\xi_2$ on $\cotb{3}$. Note that $g_\omega$ is a symmetric bilinear form on the phase space $\cotb{3}$. Moreover, as shown in \cite{Ban2002}, $g_\omega$ happens to be non-degenerate for certain classes of \MA operators. When this holds, $g_\omega$ defines a Riemannian (or pseudo-Riemannian) metric on the phase space $\cotb{3}$, which we call the \LR metric.
For $\omega$ and $\Omega$ given by (\ref{effective form}) and (\ref{canonical symplectic}), (\ref{LR metric 3d}) yields
\begin{equation}
\label{LR metric SG}
    g_\omega=2\epsilon\PV(\de x\de X+\de y\de Y+\de z\de Z).
\end{equation}
Thus, as long as the \MA equation (\ref{CS P}) is concerned, $g_\omega$ is a pseudo-Riemannian metric with signature $(3,3)$ over the phase space $\cotb{3}$. 
The \LR metric induces a pseudo-metric on every submanifold of $\cotb{3}$, and, in particular, on solutions of the \MA equation. Let
\begin{equation}
    \iota:L\to\cotb{3}
\end{equation}
be a generalized solution to (\ref{CS P}), that is, a Lagrangian submanifold such that $\iota^*\omega=0$. Then, $L$ inherits the (pseudo)-Riemannian structure of ambient space as provided by the pull-back metric
\begin{equation}
\label{pull-back metric}
    h_\omega:=\iota^*g_\omega.
\end{equation}
\begin{remark}
The \LR metric (\ref{LR metric SG}) is pseudo-Riemannian (non-degenerate) as long as $\PV\ne 0$. This is always true in the present work as we assume $\PV>0$ (see Remark \ref{rem:positive PV}). However, nothing can be said a priori about the pull-back metric (\ref{pull-back metric}) which depends on the solution $L$ and the position on $L$. In the general case, $h_\omega$ can be either Riemannian, pseudo-Riemannian and even degenerate.
We call $h_\omega$ degenerate at a point $e\in L$ if there exists a tangent vector $\xi_1\in T_e L$ such that $h_\omega(\xi_1,\xi_2)=0$ for any $\xi_2\in T_e L$.
\end{remark}

We are now able to describe one of the main results of this paper: the characterization of MAEs in terms of the geometry of $L$. We show that there is a correspondence between the signature of (\ref{pull-back metric}) and the symbol type (elliptic/parabolic/hyperbolic) of the MAE (\ref{CS P}).
This is made precise in Proposition \ref{prop: h vs type} and leads to a natural characterization of the equation type in terms of $h_\omega$ (Definition \ref{def: type vs h}).
We start with classical solutions, amenable for treatment through linearization. Next, we consider generalized solutions with the aid of generating functions. Then we draw these ideas together in §\ref{sub:singularities}, stating the relationship between elliptic-hyperbolic transitions, projection singularities and the pull-back of the \LR metric. The section ends with an account on characteristic surfaces in terms of the pull-back metric.

\subsection{Classical solutions}
We begin with recalling a classical definition from PDE theory (see for example \cite{CH62}):
\begin{definition}[Type of a linear equation]
A second order linear PDE with principal part
\begin{equation}
    \sum_{i,j=1}^n a_{ij}(x)\frac{\partial^2 u}{\partial x_i\partial x_j}
\end{equation}
is called \textit{elliptic} if the eigenvalues of the symmetric matrix $A=[a_{ij}]$ have the same sign; \textit{hyperbolic} if one eigenvalue has the opposite sign from the others; and \textit{parabolic} if there is at least one zero eigenvalue.
\end{definition}
The notion of an equation type has been generalized to nonlinear equations by Harvey and Lawson \cite{HL82} as follows (see also §2 of \cite{Duz2004} for a geometrical perspective).
\begin{definition}
\label{def HL}
The type of a nonlinear equation \textit{at} a given solution is the type of its linearization about the solution.
\end{definition}
We are thus led to consider the linearization of equation (\ref{CS P}) about a fixed solution. Let $P+\delta P$ be a perturbation of some exact solution $P$ to (\ref{CS P}) with $\delta P=o(\epsilon)$. Introducing this ansatz into equation (\ref{CS P}) and using the Jacobi formula for determinants leads, to the first order in $\delta P$, to the linear equation satisfied by the perturbation field,
\begin{equation}
\label{linearized P}
    \Tr\left[\adj(\Hess(P))\Hess(\delta P)\right]=0,
\end{equation}
with coefficient matrix
\begin{equation}
\label{linearization matrix}
    A=\adj(\Hess(P)),
\end{equation}
where ``$\adj$'' denotes the adjugate matrix.

The assumption of strictly positive potential vorticity (Remark \ref{rem:positive PV}) implies that (\ref{linearized P}) is elliptic if $P$ is (spatially) convex and hyperbolic if $P$ is saddle shaped. Definition \ref{def HL} allows us to bring this information to the nonlinear equation (\ref{CS P}) as it stands. Also note that equation (\ref{CS P}) is nowhere parabolic as long as classical solutions are considered. In fact, equation (\ref{CS P}) with $q_g > 0$ itself prevents the eigenvalues of $\Hess(P)$ (and thus those of $A$) from vanishing. We are now in a position to prove the following
\begin{proposition}
Let $P$ be a classical solution to (\ref{CS P}) and let $L\subset \cotb{3}$ denote the graph of $dP$.
Then, the pull-back \LR metric on $L$,
\begin{equation}
\label{h P0 def}
    h_\omega:=(dP)^*g_\omega,
\end{equation}
has matrix representation
\begin{equation}
\label{LR vs A}
    h_\omega=2\adj(A)=2\epsilon\PV\Hess(P),
\end{equation}
where $A$ is the linearization matrix (\ref{linearization matrix}). Moreover, $h_\omega$ has signature $(3,0)$ if equation (\ref{CS P}) is elliptic at the solution $P$ and signature $(1,2)$ if (\ref{CS P}) is hyperbolic at $P$.
\end{proposition}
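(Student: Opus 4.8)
The plan is to parametrize the graph $L$ explicitly and pull back $g_\omega$ by a direct computation. Using $(x,y,z)$ as coordinates on $L$, the embedding is $\iota:(x,y,z)\mapsto(x,y,z,P_x,P_y,P_z)$, so that $\iota^*\de x=\de x$ (and likewise for $y,z$), while the fibre differentials become $\iota^*\de X=P_{xx}\de x+P_{xy}\de y+P_{xz}\de z$ and analogously for $\de Y,\de Z$. Writing $H:=\Hess(P)$, this says $\iota^*\de X^i=\sum_j H_{ij}\de x^j$. First I would substitute these into (\ref{LR metric SG}); since each term $\de x^i\,\de X^i$ pulls back to $\sum_j H_{ij}\,\de x^i\de x^j$ and $H$ is symmetric, summation yields $h_\omega=2\epsilon\PV\sum_{i,j}H_{ij}\,\de x^i\de x^j$, i.e. the matrix representation $2\epsilon\PV\Hess(P)$.

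To close the chain of equalities in (\ref{LR vs A}), I would invoke the classical adjugate identity $\adj(\adj(M))=(\det M)^{n-2}M$, which for $n=3$ reads $\adj(\adj(M))=\det(M)\,M$. Applying it to $M=\Hess(P)$ gives $\adj(A)=\hess(P)\,\Hess(P)$, and substituting the \MA equation (\ref{CS P}) in the form $\hess(P)=\epsilon\PV$ turns this into $\adj(A)=\epsilon\PV\,\Hess(P)$. Hence $2\adj(A)=2\epsilon\PV\,\Hess(P)=h_\omega$, as required.

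For the signature, the key observation is that $\epsilon\PV>0$ (Remark \ref{rem:positive PV}), so $h_\omega=2\epsilon\PV\,\Hess(P)$ is a positive multiple of $\Hess(P)$ and shares its signature. Let $\lambda_1,\lambda_2,\lambda_3$ be the eigenvalues of $\Hess(P)$; the constraint $\lambda_1\lambda_2\lambda_3=\hess(P)=\epsilon\PV>0$ forbids zero eigenvalues and forces an even number (zero or two) of them to be negative, so that only the signatures $(3,0)$ and $(1,2)$ can occur. I would then relate these to the symbol type through the eigenvalues of $A=\adj(\Hess(P))$, namely $\{\lambda_2\lambda_3,\lambda_1\lambda_3,\lambda_1\lambda_2\}$. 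If all $\lambda_i>0$ then $\Hess(P)$ has signature $(3,0)$ and all eigenvalues of $A$ are positive, so (\ref{CS P}) is elliptic; if exactly two of the $\lambda_i$ are negative then $\Hess(P)$ has signature $(1,2)$ and $A$ has a single eigenvalue of sign opposite to the other two, so (\ref{CS P}) is hyperbolic. This matches the elliptic/hyperbolic dichotomy already noted for classical solutions.

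The computation itself is routine; the points demanding care are the convention for the symmetric product in (\ref{LR metric SG}) (fixing the overall factor of $2$ that must be carried consistently through the pullback), and the bookkeeping of signs in the eigenvalue correspondence between $\Hess(P)$ and its adjugate $A$ — in particular verifying that the two-negative-eigenvalue case of $\Hess(P)$ really does produce the ``one eigenvalue of opposite sign'' pattern that defines hyperbolicity of $A$.
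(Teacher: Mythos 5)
Your proposal is correct and follows essentially the same route as the paper's proof: a direct pullback computation giving $h_\omega=2\epsilon\PV\Hess(P)$, the adjugate identity $\adj(\adj(M))=\det(M)^{\,n-2}M$ with $n=3$ to obtain $h_\omega=2\adj(A)$, and a sign analysis of the eigenvalues (constrained by $\hess(P)=\epsilon\PV>0$) to pin the signature to $(3,0)$ or $(1,2)$. The only cosmetic difference is that you verify the elliptic/hyperbolic correspondence by explicitly listing the eigenvalues of $A$ as pairwise products $\lambda_i\lambda_j$, whereas the paper argues via $\det(h_\omega)=8(\epsilon\PV)^4>0$ and the convex/saddle dichotomy established just before the proposition; both settle the same point.
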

\begin{proof}
By direct calculation. Recalling that,
\begin{equation}
    dP:(x,y,z)\mapsto \left(x,y,z,\frac{\partial P}{\partial x},\frac{\partial P}{\partial y},\frac{\partial P}{\partial z}\right),
\end{equation}
equation (\ref{h P0 def}) implies
\begin{equation}
    h_\omega=(dP)^*g_\omega=2\epsilon\PV\frac{\partial^2 P}{\partial x^i\partial x^j}\de x^i\de x^j,
\end{equation}
where we have set $(x^1,x^2,x^3)\equiv(x,y,z)$, and summation on repeated indices is implied. Therefore, $h_\omega$ has matrix representation
\begin{equation}
    h_\omega=2\epsilon\PV\Hess(P).
\end{equation}
On the other hand, equation (\ref{linearization matrix}) plus the algebraic identity
\begin{equation}
    \adj(\adj(M))=\det(M)^{n-2}M,
\end{equation}
holding for any square matrix $M\in \mathbb{R}^{n\times n}$ with $n>2$, gives
\begin{equation}
    \adj(A)=\adj(\adj(\Hess(P)))=\det(\Hess(P))\Hess(P)=\epsilon\PV\Hess(P),
\end{equation}
and thus equation (\ref{LR vs A}) follows.
As for the second part, we observe that the determinant,
\begin{equation}
    \det(h_\omega)=8\det(A)^2=8\det\Hess(P)^4=8(\epsilon\PV)^4,
\end{equation}
is always positive, so the eigenvalues of (\ref{LR vs A}) can only be \textit{(i)} all positive or \textit{(ii)} one positive and two negative. According to (\ref{LR vs A}), \textit{(i)} occurs when $P$ is convex and case \textit{(ii)} occurs when $P$ is saddle shaped. Therefore, cases \textit{(i)} and \textit{(ii)} correspond to (\ref{CS P}) being respectively elliptic or hyperbolic.
\end{proof}
\begin{remark}
The first equality in (\ref{LR vs A}) turns out to be a general property of symplectic 3-dimensional \MA equations. In particular, it applies to the dual equations (\ref{CS R}), (\ref{CS S}) and (\ref{CS T}) above.
\end{remark}
Definition \ref{def HL} is no longer directly applicable when generalized solutions are allowed, as the whole linearization process is ill defined. However, equation (\ref{LR vs A}) suggests a characterization of ellipticity of equation (\ref{CS P}) based on $h_\omega$, which applies to generalized solutions too. Indeed, the equation type at a generalized solution is directly traceable to the signature of $h_\omega$ in a definite way.
In the remainder of this section, we will prove consistency of this characterization by relying on the local description of generalized solutions in terms of generating functions.

\subsection{Generalized solutions}
We recall that any generating function $f$ of a generalized solution $L$ satisfies a \MA equation which arises from expressing the condition $\omega|_L=0$ in local coordinates on $L$. Moreover, as $f$ is a classical solution to this equation, there are no obstructions to linearization. Thus, we can give the following
\begin{definition}
\label{def: f vs type}
Let $L$ be a generalized solution to (\ref{CS P}) locally generated by a generating function $f$. We say that (\ref{CS P}) is elliptic, parabolic or hyperbolic at some point $e\in L$ if $f$ satisfies a \MA equation of the same type at the point. 
\end{definition}
We remark that the symbol type of a differential equation is invariant under a change of variables \cite{CH62}, and this ensures consistency of Definition \ref{def: f vs type}. We are thus in a position to prove the
\begin{proposition}
\label{prop: h vs type}
Let $\iota:L\mapsto\cotb{3}$ be a generalized solution to (\ref{CS P}). Then, the pull-back metric $h_\omega=\iota^*g_\omega$ has signature $(3,0)$ on elliptic branches of $L$, $(1,2)$ on hyperbolic branches, and degenerates along parabolic branches.
\end{proposition}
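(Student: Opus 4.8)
\emph{The plan.} I would localise and reduce to the classical computation via a generating function, then read the signature off the symbol of the associated \MA equation. Near any point $e\in L$ --- regular or singular --- at least one of the eight coordinate subsets of $\{x,y,z,X,Y,Z\}$ obtained by selecting one representative from each conjugate pair $(x,X),(y,Y),(z,Z)$ restricts to a chart on $L$; this is the standard fact guaranteeing a local generating function $f$. Since $f$ solves its own \MA equation $E_f$ in the classical sense, its linearisation is well defined, and by Definition \ref{def: f vs type} the type of (\ref{CS P}) at $e$ is the type of $E_f$. The symplectic invariance of the \LR metric makes this notion independent of the chart chosen.

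\emph{Key step.} The heart of the argument is to show that, in the coordinates adapted to $f$, the pull-back metric still satisfies $h_\omega=2\,\adj(A_f)$, where $A_f$ is the coefficient matrix of the linearisation of $E_f$; this extends (\ref{LR vs A}) from the classical to the generalized setting. I would verify it by a direct pull-back, the potential $S(X,Y,z)$ being representative. Using the parametrisation (\ref{L S}), namely $x=S_X$, $y=S_Y$, $Z=-S_z$ with $(X,Y,z)$ as coordinates, a short computation (in which the mixed $z$-terms cancel by $S_{Xz}=S_{zX}$) gives
\begin{equation}
    h_\omega=2\epsilon\PV\begin{pmatrix} S_{XX} & S_{XY} & 0 \\ S_{XY} & S_{YY} & 0 \\ 0 & 0 & -S_{zz} \end{pmatrix}.
\end{equation}
Linearising (\ref{CS S}) yields the block-diagonal symbol matrix
\begin{equation}
    A_S=\begin{pmatrix} \epsilon\PV\, S_{YY} & -\epsilon\PV\, S_{XY} & 0 \\ -\epsilon\PV\, S_{XY} & \epsilon\PV\, S_{XX} & 0 \\ 0 & 0 & 1 \end{pmatrix},
\end{equation}
and substituting (\ref{CS S}) in the form $S_{zz}=-\epsilon\PV(S_{XX}S_{YY}-S_{XY}^2)$ confirms that $2\,\adj(A_S)$ equals the matrix above. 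The remaining classes are treated identically; as remarked after the classical Proposition, $h_\omega=2\,\adj(A_f)$ is a general feature of three-dimensional symplectic \MA operators.

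\emph{From type to signature.} The concluding step is linear algebra, uniform in $f$. In dimension three the eigenvalues of $\adj(A_f)$ are the pairwise products $\lambda_i\lambda_j$ $(i<j)$ of the eigenvalues $\lambda_1,\lambda_2,\lambda_3$ of $A_f$. Thus if $E_f$ is elliptic (all $\lambda_i$ of one sign) every product is positive and $h_\omega=2\,\adj(A_f)$ is positive definite --- signature $(3,0)$; if hyperbolic (one $\lambda_i$ of opposite sign) exactly two products are negative --- signature $(1,2)$; if parabolic (some $\lambda_i=0$) at least two products vanish, so $h_\omega$ degenerates. The positive factor $2\epsilon\PV$ (Remark \ref{rem:positive PV}) is irrelevant to signs. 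Invoking Definition \ref{def: f vs type} then delivers the three cases.

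\emph{Main obstacle.} I expect the difficulty to be organisational rather than computational. One must confirm $h_\omega=2\,\adj(A_f)$ for each of the eight generating-function classes, since the explicit pull-back differs between them, and --- more delicately --- that a valid generating function exists near every singular point, where $\pi_L$ loses rank and the three classical coordinates $(x,y,z)$ fail. The conceptual crux is that the eigenvalue sign-pattern of $A_f$, hence the type, is an invariant of $e\in L$ rather than of its coordinate representation; this rests on the symplectic invariance of the \LR metric and is what makes the signature of $h_\omega$ a genuine attribute of the solution.
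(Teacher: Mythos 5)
Your proposal is correct and follows essentially the same route as the paper's proof: localise via a generating function, linearise its \MA equation, verify $h_\omega=2\,\adj(A)$ by direct pull-back (your computation with $S(X,Y,z)$ checks out exactly as the paper's does with $T(x,y,Z)$), and read the signature from the adjugate. Your concluding step---using that the eigenvalues of $\adj(A)$ are the pairwise products of those of $A$---is a slightly more uniform way to finish than the paper's direct inspection of the block matrix, but it is the same argument in substance.
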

\begin{proof}
This proposition is proved by direct inspection of the linearized \MA equation satisfied by the generating function $f$.
We explicitly carry out the calculations for the case of $f=Zz+T$ (the remaining cases are addressed similarly and lead to the same conclusions).
Thus, let $L$ be some generalized solution locally described by $T(x,y,Z)$ according to (\ref{L T}). Introducing a perturbation $T+\delta T$ of an exact solution $T$ to (\ref{CS T}), with $\delta T=o(\epsilon)$, leads to a linear equation satisfied by the perturbation field,
\begin{equation}
    \frac{\partial^2 T}{\partial x^2}\frac{\partial^2 \delta T}{\partial y^2}+\frac{\partial^2 T}{\partial y^2}\frac{\partial^2 \delta T}{\partial x^2}-2\frac{\partial^2 T}{\partial x\partial y}\frac{\partial^2 \delta T}{\partial x\partial y}+\epsilon\PV\frac{\partial^2 \delta T}{\partial Z^2}=0.
\end{equation}
Its coefficient matrix is
\begin{equation}
\label{principal part T}
    A=\begin{pmatrix}
    \frac{\partial^2 T}{\partial y^2} & -\frac{\partial^2 T}{\partial x\partial y} & 0\\
    -\frac{\partial^2 T}{\partial x\partial y} & \frac{\partial^2 T}{\partial x^2} & 0\\
    0 & 0 & \epsilon\PV
    \end{pmatrix}=
    \begin{pmatrix}
    \adj(H) & 0\\
    0 & \epsilon\PV
    \end{pmatrix},
\end{equation}
where
\begin{equation}
    H:=\begin{pmatrix}
    \frac{\partial^2T}{\partial x^2} & \frac{\partial^2T}{\partial x\partial y}\\
    \frac{\partial^2T}{\partial x\partial y} & \frac{\partial^2T}{\partial y^2}
    \end{pmatrix}.
\end{equation}
On the other hand, the \LR metric $h_\omega=g_\omega|_L$ has the local coordinate expression
\begin{equation}
\label{h T}
    h_\omega=2\epsilon\PV\left(\frac{\partial^2 T}{\partial x^2}\de x^2+2\frac{\partial^2 T}{\partial x\partial y}\de x\de y+\frac{\partial^2 T}{\partial y^2}\de y^2-\frac{\partial^2 T}{\partial Z^2}\de Z^2\right),
\end{equation}
and may be written in matrix form as
\begin{equation}
\label{h T matrix}
    h_\omega=2\epsilon\PV
    \begin{pmatrix}
    \frac{\partial^2T}{\partial x^2} & \frac{\partial^2T}{\partial x\partial y} & 0\\
    \frac{\partial^2T}{\partial x\partial y} & \frac{\partial^2T}{\partial y^2} & 0\\
    0 & 0 & -\frac{\partial^2T}{\partial Z^2}
    \end{pmatrix}=2
    \begin{pmatrix}
    \epsilon\PV H & 0\\
    0 & \det(H)
    \end{pmatrix},
\end{equation}
where the last equality follows from (\ref{CS T}).
We note in passing that
\begin{equation}
\label{h vs A S}
    h_\omega=2\adj(A).
\end{equation}
We see from (\ref{principal part T}) that equation (\ref{CS T}) is elliptic as long as $H$ is positive definite,
which, in light of (\ref{h T matrix}), corresponds to $h_\omega$ having signature $(3,0)$. Parabolic and hyperbolic cases correspond to $\det(H)=0$ and $\det(H)<0$ respectively. Therefore, it follows from equation (\ref{h T matrix}) that $h_\omega$ is degenerate on parabolic branches and of type $(1,2)$ on hyperbolic ones.
\end{proof}
The significance of Proposition \ref{prop: h vs type} is that $h_\omega$ encodes all the essential information about the equation type, and may be used to give an invariant definition of the symbol type based on its signature.  This may be summarized as follows:
\begin{definition}
\label{def: type vs h}
    Let $L$ be a generalized solution to (\ref{CS P}) with \LR metric $h_\omega$. We say that (\ref{CS P}) is elliptic or hyperbolic at a point $e\in L$ if $(h_\omega)_e$ is respectively of type $(3,0)$ or $(1,2)$. We say that (\ref{CS P}) is parabolic at $e\in L$ if $(h_\omega)_e$ is degenerate.
\end{definition}

\subsection{Singularities}
\label{sub:singularities}
In this section we show that elliptic-hyperbolic transitions in (\ref{CS P}) can only occur along through some singularity. This result is closely related to the assumption of strictly positive potential vorticity, which, according to (\ref{CS P}), prevent the eigenvalues of $\Hess(P)$ from changing sign as long as classical solutions are considered. As a result, the pull-back of the \LR metric degenerates on singularities.

We recall that points on $L$ are called \textit{regular} if the tangent map $\de \pi_L$ is surjective, and \textit{singular} otherwise. We denote by $\Sigma L$ the set of singular points on $L$. 
\begin{proposition}
\label{prop: Sigma L = parabolic}
Let $L$ be a generalized solution to (\ref{CS P}). Then the set of parabolic points on $L$ coincides with the singular locus $\Sigma L$.
\end{proposition}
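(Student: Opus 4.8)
The plan is to extend the local, chart-by-chart analysis already used for Proposition \ref{prop: h vs type}, comparing the locus where the projection $\pi_L$ drops rank with the parabolic locus computed there. For definiteness I would begin with the chart in which $L$ is generated by $T(x,y,Z)$, so that $L$ is parametrised by $(x,y,Z)$ through (\ref{L T}) and $\pi_L$ reads $(x,y,Z)\mapsto(x,y,-\partial T/\partial Z)$. Its Jacobian is lower triangular with determinant $-\partial^2T/\partial Z^2$, so $e\in\Sigma L$ precisely when $\partial^2 T/\partial Z^2=0$. On the other hand, the proof of Proposition \ref{prop: h vs type} shows that $e$ is a parabolic point exactly when $\det(H)=0$, where $H$ is the $(x,y)$-Hessian of $T$.

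These two conditions are tied together by the \MA equation (\ref{CS T}) itself, which states $\det(H)+\epsilon\PV\,\partial^2T/\partial Z^2=0$, i.e. $\det(H)=-\epsilon\PV\,\partial^2T/\partial Z^2$. Since $\epsilon\PV>0$ by Remark \ref{rem:positive PV}, the two determinants vanish simultaneously, so the parabolic points coincide with $\Sigma L$ in this chart. I would then note that the companion charts behave identically: in the $S(X,Y,z)$ chart both conditions reduce directly to the vanishing of the $(X,Y)$-Hessian of $S$, while in the $R(X,Y,Z)$ chart (\ref{CS R}) gives $\hess(R)=1/(\epsilon\PV)\neq0$, so that neither singular nor parabolic points occur there, in agreement.

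The main obstacle is bookkeeping: the four Chynoweth--Sewell charts cover only half of the eight coordinate $3$-subsets of $\{x,y,z,X,Y,Z\}$, and one must be sure that every point of $L$ --- in particular every singular point, where $(x,y,z)$ fail as coordinates --- lies in some chart where the argument applies. Rather than treat the remaining subsets one by one, I would prefer to package the result invariantly. Because $g_\omega$ in (\ref{LR metric SG}) consists only of the mixed products $\de x\,\de X,\ \de y\,\de Y,\ \de z\,\de Z$, the horizontal distribution $H=\ker\de\pi$ and the vertical distribution $V$ are each totally isotropic and $g_\omega$ pairs them perfectly; a short computation using the Lagrangian condition $\Omega|_L=0$ then gives, at each $e\in L$, the splitting $\mathrm{rad}(h_\omega)=(T_eL\cap H)\oplus(T_eL\cap V)$. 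Now $e\in\Sigma L$ is exactly the statement $T_eL\cap V\neq0$, while evaluating the solution condition $\omega|_{T_eL}=0$ on a basis of $T_eL$ yields $\det(P_V)=\epsilon\PV\det(P_H)$ for the projections $P_H,P_V$ of $T_eL$ onto $H,V$; since $\epsilon\PV>0$ this forces $T_eL\cap H\neq0\iff T_eL\cap V\neq0$. Hence $h_\omega$ degenerates at $e$ iff $T_eL\cap V\neq0$, i.e. the parabolic locus equals $\Sigma L$, uniformly over $L$.
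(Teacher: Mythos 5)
Your proposal is correct, and its second half takes a genuinely different route from the paper. The first half (the $T$-chart computation, using (\ref{CS T}) and $\epsilon\PV>0$ to identify the vanishing of $\det(\de\pi_L)=-\partial^2T/\partial Z^2$ with the vanishing of $\det(H)$) is exactly the paper's proof; the paper then simply asserts that ``the remaining classes of generating functions'' behave identically and omits those computations. Where you diverge is in replacing that omitted chart-by-chart bookkeeping with an invariant linear-algebra argument, and this is the stronger move: the splitting $\mathrm{rad}\,(h_\omega)_e=(T_eL\cap H)\oplus(T_eL\cap V)$ does hold for any Lagrangian plane (both intersections are isotropic and lie in the radical by $\Omega|_L=0$; conversely, the Lagrangian condition gives $p_H(T_eL)^\perp=p_V(T_eL\cap V)$, which pushes any radical vector into the sum), and the relation $\det(P_V)=\epsilon\PV\det(P_H)$ obtained by evaluating $\omega$ on a basis of $T_eL$ is precisely what makes the two intersections vanish or not simultaneously. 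This buys uniformity over all of $L$ with no appeal to which of the $2^3$ coordinate $3$-subsets serves as a chart near a given point, and it isolates the role of $\PV>0$ more transparently than the coordinate computations do; the paper's approach, in exchange, stays elementary and exhibits the explicit parabolic/singular equation in the physically meaningful Chynoweth--Sewell variables. Two small points to fix: your naming is internally inconsistent --- you define $H=\ker\de\pi$ and call it ``horizontal,'' but later characterize $\Sigma L$ by $T_eL\cap V\neq 0$; conventionally $\ker\de\pi$ is the \emph{vertical} distribution, and although your proven equivalence $T_eL\cap H\neq 0\iff T_eL\cap V\neq 0$ makes the slip mathematically harmless, it should be cleaned up. Also, your identification of ``parabolic'' with degeneracy of $h_\omega$ should be flagged explicitly as resting on Proposition \ref{prop: h vs type} (equivalently Definition \ref{def: type vs h}), since Definition \ref{def: f vs type} phrases parabolicity through generating functions.
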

\begin{proof}
Once again, we rely on a local description in coordinates and generating functions to prove our result.
Let a solution $L$ be locally generated by a function $f=Zz+T$, that is,
\begin{equation}
    L=\left\{(x,y,z,X,Y,Z)\in \cotb{3}:X=\frac{\partial T}{\partial x},\;Y=\frac{\partial T}{\partial y},\;z=-\frac{\partial T}{\partial Z}\right\}.
\end{equation}
In local coordinates $\{x,y,Z\}$ on $L$, the projection mapping reads
\begin{equation}
    \pi_L(x,y,Z)=\left(x,y,-\frac{\partial T}{\partial Z}\right),
\end{equation}
and so it is singular on points satisfying
\begin{equation}
\label{Sigma L T}
    \det(\de\pi_L)=-\frac{\partial^2T}{\partial Z^2}=(\epsilon\PV)^{-1}\bigg(\frac{\partial^2 T}{\partial x^2}\frac{\partial^2 T}{\partial y^2}-\left(\frac{\partial^2 T}{\partial x\partial y}\right)^2\bigg)=0.
\end{equation}
On the other hand, we know from the proof of Proposition \ref{prop: h vs type} that parabolic points on $L$ satisfy the same equation.
To complete the proof, one should examine in turn each of the remaining classes of generating functions. However, calculations are almost identical to those we have already exhibited, and we omit them for conciseness.
\end{proof}
Propositions \ref{prop: h vs type} and \ref{prop: Sigma L = parabolic} are combined to
\begin{corollary}
\label{cor: h vs singularity}
Given a generalized solution $L$ to (\ref{CS P}), the induced \LR metric on $L$ degenerates along the singular locus $\Sigma L$.
\end{corollary}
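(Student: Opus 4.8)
The plan is to obtain the corollary as an immediate consequence of the two preceding propositions, with essentially no new computation required. First I would recall that a point $e\in L$ belongs to the singular locus $\Sigma L$ precisely when the tangent map $\de\pi_L$ fails to be surjective there, and that $h_\omega$ is said to degenerate at $e$ when $T_eL$ contains a nonzero vector orthogonal, with respect to $(h_\omega)_e$, to the whole of $T_eL$. The argument then proceeds by chaining two identifications already in hand: Proposition \ref{prop: Sigma L = parabolic} identifies $\Sigma L$ with the set of parabolic points of (\ref{CS P}) along $L$, while Proposition \ref{prop: h vs type} asserts that $h_\omega$ degenerates exactly on parabolic branches. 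Composing the two yields that $h_\omega$ degenerates along $\Sigma L$, which is the claim.

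To make the reasoning genuinely self-contained within a single chart, I would instead combine the two computations that underlie those propositions, working with the generating function $f=Zz+T$. From the block form (\ref{h T matrix}) one reads off
\begin{equation}
    \det(h_\omega)=8(\epsilon\PV)^2\det(H)^2,
\end{equation}
so $(h_\omega)_e$ is degenerate if and only if $\det(H)=0$ at $e$. On the other hand, equation (\ref{Sigma L T}) gives $\det(\de\pi_L)=(\epsilon\PV)^{-1}\det(H)$, whence $e\in\Sigma L$ if and only if $\det(H)=0$ as well. Since $\epsilon\PV>0$ by assumption (Remark \ref{rem:positive PV}), the two vanishing conditions coincide, and the equivalence between membership in $\Sigma L$ and degeneracy of $(h_\omega)_e$ follows at once. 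I would then close, exactly as in Proposition \ref{prop: Sigma L = parabolic}, by noting that the remaining classes of generating functions are handled by identical block computations.

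I do not expect a genuine obstacle here: the substantive work has already been carried out in establishing Propositions \ref{prop: h vs type} and \ref{prop: Sigma L = parabolic}, and the corollary merely transports the common condition $\det(H)=0$ between the singularity criterion and the degeneracy criterion. The only point that will deserve a word of care is the conceptual consistency between the two notions of ``parabolic'' in play, the one coming from the linearized \MA equation for the generating function (Definition \ref{def: f vs type}) and the one read off from the signature of $h_\omega$ (Definition \ref{def: type vs h}); but this agreement is precisely the content of Proposition \ref{prop: h vs type}, so no further verification is needed.
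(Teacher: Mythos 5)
Your proposal is correct and takes essentially the same approach as the paper: the corollary is obtained there exactly as in your first paragraph, by combining Proposition \ref{prop: h vs type} (degeneracy of $h_\omega$ at parabolic points) with Proposition \ref{prop: Sigma L = parabolic} (the parabolic set coincides with $\Sigma L$), with no further argument given. Your second paragraph's chart computation with $f=Zz+T$, yielding $\det(h_\omega)=8(\epsilon\PV)^2\det(H)^2$ and $\det(\de\pi_L)=(\epsilon\PV)^{-1}\det(H)$, is a correct but redundant unpacking of the calculations already contained in the proofs of those two propositions.
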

Thus, every regular branch of a multivalued solution $L\subset \cotb{3}$ is of a single type (elliptic or hyperbolic), and transitions are only possible in passing from one branch to another.
\begin{remark}
Chynoweth and Sewell's approach to singularities is by the Legendre transform (see equations (12) of \cite{CS89}).
Once a solution to (\ref{CS R}), (\ref{CS S}) or (\ref{CS T}) is known, the (possibly multivalued) geopotential is recovered by the inverse Legendre transform,
\begin{align}
\label{Legendre transform R}
    &P=Xx+Yy+Zz-R, & &\qquad x=\frac{\partial R}{\partial X},\qquad y=\frac{\partial R}{\partial Y},\qquad z=\frac{\partial R}{\partial Z},\\
\label{Legendre transform S}
    &P=Xx+Yy-S, & &\qquad x=\frac{\partial S}{\partial X},\qquad y=\frac{\partial S}{\partial Y},\\
\label{Legendre transform T}
    &P=Zz+T, & &\qquad z=-\frac{\partial T}{\partial Z}.
\end{align}
Singularities are then identified according to their effects on the graph of the multivalued $P$. Chynoweth and Sewell's viewpoint is reconciled with the geometric viewpoint as follows. We denote by $J^1(\rn{3})\simeq \cotb{3}\times \mathbb{R}$ the bundle of 1-jets over the physical space $\rn{3}(x,y,z)$, and we endow it with coordinates
\begin{equation}
    (x,y,z,u,X,Y,Z).
\end{equation}
In this extended space, a generalized solution is understood as a Legendrian submanifold\;\footnote{A Legendrian submanifold of a contact manifold $(M^{2n+1},C)$ is a $n$-dimensional integral manifold of the contact distribution $C$. If $M=\jet{3}$, $C$ is canonically described as the kernel of the Cartan 1-form $\Theta=du-Xdx-Ydy-Zdz$ (see for example \cite{KLR2006}).} $\hat\iota:L\to\jet{3}$ such that $\hat\iota^*\omega=0$ (here, $\omega$ is understood as a 3-form on $\jet{3}$).
Then, the projection $\hat{\pi}_L=\hat\pi\circ\hat{\iota}$ of $L$ to the base space $J^0\rn{3}$ of the jet bundle, where
\begin{equation}
    \hat{\pi}:\jet{3}\to J^0\rn{3},\quad (x,y,z,u,X,Y,Z)\mapsto (x,y,z,u),
\end{equation}
results in the graph of the (possibly) multivalued geopotential $P$, parametrized by the local coordinates on $L$ by equations (\ref{Legendre transform R}), (\ref{Legendre transform S}) or (\ref{Legendre transform T}). The following commutative diagram summarizes the situation.
\begin{equation*}
    \xymatrix{
    \jet{3} \ar[d]_{\hat{\pi}} & L  \ar@{_{(}->}[l]_{\hat{\iota}} \ar@{^{(}->}[r]^{\iota} \ar[ld]^{\hat{\pi}_L} \ar[rd]_{\pi_L} & \cotb{3} \ar[d]^{\pi}\\
    J^0\rn{3} & {} & \rn{3}
    }
\end{equation*}
\begin{equation*}
    {}
\end{equation*}
\end{remark}

\subsection{Characteristic variety}
The strong connection between $h_\omega$ and the symbol type of (\ref{CS P}) suggests a link with the characteristic surfaces as well.
In this section, we explore the geometry of the light cone of $h_\omega$, and use it to introduce a suitable notion of characteristic surfaces in hyperbolic and parabolic regime.

A central role in this subject is played by vectors of null length, which, borrowing terminology from relativity theory, are called \textit{light-like}. The set of light-like vectors based at a point is called the \textit{light cone} or the \textit{characteristic variety}. This notion is made precise by the following definition, which builds on the work of Kossowski \cite{Kos91} on 2-dimensional \MA equations.
\begin{definition}
Let $g_\omega$ be given by (\ref{LR metric SG}) and let $e\in\cotb{3}$. The cotangent characteristic variety (or simply characteristic variety) $CV_e\subset T_e(\cotb{3)}$ is the cone
\begin{equation}
    CV_e:=\{\xi\in T_e(\cotb{3)}:g_\omega(\xi,\xi)=0\}.
\end{equation}
Let $L\subset\cotb{3}$ be some generalized solution to the (\ref{CS P}) and let $e\in L$. We denote the pull-back of the characteristic variety to $L$ as
\begin{equation}
    \CVL{e}:=\{\xi\in T_eL:h_\omega(\xi,\xi)=0\},
\end{equation}
where $h_\omega=g_\omega|_L$ is the pull-back metric on $L$.
\end{definition}
It easily follows from Definition \ref{def: type vs h} that the characteristic variety $\CVL{e}$ is a full-fledged cone if $e$ is an hyperbolic point, a degenerate cone if $e$ is a parabolic point, and the zero vector, $\{0\}\subset T_eL$, if $e$ is an elliptic point. The characteristic variety is the basic ingredient to build the characteristic surfaces within a generalized solution $L$. We understand a characteristic surface $\mathcal{C}\subset L$ as the enveloping surface of characteristic varieties $\CVL{e}$ as $e$ varies across $\mathcal C$, as the following definition clarifies.
\begin{definition}
\label{def: characteristics}
A surface $\mathcal{C}\subset L$ is called \textit{characteristic} if at any point $e\in\mathcal C$, the tangent space $T_e\mathcal C$ comprises one (and only one) light-like direction.
\end{definition}
We remark that Definition \ref{def: characteristics} closely parallels the notion of characteristics in general relativity, where they are identified with light-like surfaces \cite{EN2000}. Definition \ref{def: characteristics} may be considered as a straightforward generalization to nonlinear PDEs of the classical notion of characteristics for linear PDEs (see Appendix \ref{appendix: characteristics}). 
To motivate this statement, fix coordinates $\{q^1,q^2,q^3\}$ on $L$ and consider the surface
\begin{equation}
    \mathcal{C}=\{(q^1,q^2,q^3)\in L:F(q^1,q^2,q^3)=0\}.
\end{equation}
Further, consider the following vector based at points on $\mathcal{C}$,
\begin{equation}
    dF^\sharp=h^{ij}\frac{\partial F}{\partial q^i}\frac{\partial}{\partial q^j},
\end{equation}
where summation on repeated indices is implied and $h^{ij}$ denotes the components of the inverse metric $h_\omega^{-1}$. It is straightforward to check equivalence of the following statements: \textit{(i)} $dF^\sharp$ is a tangent vector to $\mathcal{C}$, \textit{(ii)} $dF^\sharp$ is a light-like vector, \textit{(iii)} $F$ satisfies
\begin{equation}
\label{eikonal L}
    h_\omega^{-1}(dF,dF)=0.
\end{equation}
Equation (\ref{eikonal L}) is the analogue of the eikonal equation (\ref{eikonal Rn}) in the linear setting, where the matrix coefficient is replaced by the inverse of $h_\omega$. We shall further elaborate on this analogy next. Note that equation (\ref{eikonal L}) is only well defined away from parabolic points, where $h_\omega$ is not invertible, and has only nontrivial solutions on hyperbolic branches of $L$. Any hyperbolic branch is regular by virtue of Proposition \ref{prop: Sigma L = parabolic}, so it may be described as the graph of a function, e.g. $P^*(x,y,z)$, to the extent a classical solution is (we may take $\{x,y,z\}$ as local coordinates on hyperbolic branches). Therefore, the linearization of (\ref{CS P}) about a hyperbolic branch of $L$ is well defined, and, building on equation (\ref{LR vs A}), we may write
\begin{equation}
    h_\omega^{-1}=\frac{1}{2\det(A)}A,
\end{equation}
where $A=\adj(\Hess(P^*))$ is the coefficient matrix of the linearized equation (\ref{CS P}) about $P^*$. Since $\det(A)\ne 0$ on hyperbolic points, we may get rid of this term in equation (\ref{eikonal L}), and write
\begin{equation}
    \nabla F\cdot A\nabla F=\sum_{i,j=1}^3 a_{ij}(x,y,z)\frac{\partial F}{\partial x^i}\frac{\partial F}{\partial x^j}=0,
\end{equation}
where $(x^1,x^2,x^3)=(x,y,z)$.
The analogy with (\ref{eikonal Rn}) should now be apparent.

Equation (\ref{eikonal L}) is a nonlinear PDE of the first order, and, as (\ref{eikonal Rn}), is solved by the classical methods of wave optics (see for example Appendix 4 of \cite{Arn78}). Namely, the solution surface is understood as foliated by light-like curves (i.e., the light rays of wave optics) which satisfy a set of Hamilton's canonical equations of motion. We briefly recall the main steps of the solution procedure for completeness of exposition \cite{Arn78,CH62}.
Consider the cotangent bundle $T^*L$ with coordinates 
\begin{equation}
    \{q^1,q^2,q^3,p_1,p_2,p_3\},
\end{equation}
and symplectic structure
\begin{equation}
    \Omega=dp_i\wedge\de q^i.
\end{equation}
Equation (\ref{eikonal L}) is thus interpreted as the zero level set of the Hamiltonian function $\mathcal{H}:T^*L\to\mathbb{R}$,
\begin{equation}
\label{Hamiltonian rays}
    \mathcal{H}(p,q)=h(q)^{ij}p_ip_j,
\end{equation}
under the identification $p_i=\partial F/\partial q^i$. Characteristic curves of (\ref{eikonal L}) are defined by \cite{Arn78} as the integral curves of the Hamiltonian vector field $\xi_\mathcal{H}$,
\begin{equation}
    -d\mathcal H=\iota_{\xi_\mathcal{H}}\Omega,
\end{equation}
and satisfy the Hamilton's canonical equations
\begin{equation}
\label{Hamilton's eq}
    \dot q^i=\frac{\partial \mathcal H}{\partial p_i}=2(h_\omega)^{ij}p_j, \qquad \dot p_i=-\frac{\partial \mathcal H}{\partial q^i}=-\frac{\partial (h_\omega)^{jk}}{\partial q^i}p_jp_k.
\end{equation}
Initial conditions are not free, but are subject to the condition
\begin{equation}
\label{ll rays}
    \mathcal H(p(0),q(0))=0.
\end{equation}
Integral curves of (\ref{Hamilton's eq}) are called \textit{bicharacteristics} \cite{Lyc85}. Once projected to $L$ along the cotangent bundle, $\Bar{\pi}:T^*L\to L$, bicharacteristics foliate the characteristic surfaces $\mathcal{C}\subset L$. The following commutative diagram summarizes the relations introduced so far.
\begin{equation*}
    \xymatrix{T^*L  \ar[d]_{\Bar{\pi}} & {}\\
    L \ar@{^{(}->}[r]^{\iota} \ar[dr]_{\pi_L} & \cotb{3} \ar[d]^{\pi}\\
    {} & \rn{3}
    }
\end{equation*}


\begin{remark}
\label{remark Lagrangian light rays}
Light rays are equivalently described by the Lagrangian
\begin{equation}
\label{Lagrangian rays}
    \mathcal{L}=h_{ij}(q)\dot q^i\dot q^j,
\end{equation}
related to the Hamiltonian (\ref{Hamiltonian rays}) by the classical Legendre transform.
The equivalent of (\ref{ll rays}) reads
\begin{equation}
\label{ll geodesics condition}
    \mathcal{L}(q(0),\dot q(0))=0.
\end{equation}
The Euler-Lagrange equations associated with (\ref{Lagrangian rays}) plus condition (\ref{ll geodesics condition}) precisely yield the light-like geodesics of $h_\omega$. Therefore, characteristic surfaces $\mathcal{C}$ are foliated by light-like geodesics of $h_\omega$, and this offers an alternative approach to computing them.
\end{remark}

In the next section we provide an example of a generalized solution to the SG equations, and show the interaction of characteristics, elliptic-hyperbolic transition and singularities.


\section{Exact solutions}
\label{sec:examples}
Not many exact solutions to the full semigeostrophic system (\ref{SG nondimensional}) are known \cite{BT97,Shu91}, and even fewer are the generalized ones. A common assumption often encountered in the literature is that of uniform potential vorticity, which helps finding particular solutions and possesses physical relevance. Under this assumption, equation (\ref{CS P}) decouples from system (\ref{SG nondimensional}), and is interpreted as a \MA equation for the unknown geopotential. The choice $\epsilon\PV=1$ brings (\ref{CS P}) to the form
\begin{equation}
\label{CS P 1}
    \hess(P)=1,
\end{equation}
widely studied in the literature. A 2-parameter family of classical solutions to (\ref{CS P 1}) is introduced in \cite{Ban2002} and generalized in \cite{LYZ2005}. Once a particular solution to (\ref{CS P 1}) is selected, one is able to build a full solution to (\ref{SG nondimensional}) as we show in Section \ref{sec:examples}.

\subsection{Construction of exact solutions}
We build generalized solutions to (\ref{CS P 1}) by solving (\ref{CS T}), which, under the assumption $\epsilon\PV=1$, takes the form
\begin{align}
    \label{CS T 1}
    &\frac{\partial^2T}{\partial x^2}\frac{\partial^2T}{\partial y^2}-\left(\frac{\partial^2T}{\partial x\partial y}\right)^2+\frac{\partial^2T}{\partial Z^2}=0.
\end{align}
Although time dependence is still possible for constant vorticity flows, we shall restrict to stationary solutions which nevertheless show some interesting features.
We look for analytical solutions to (\ref{CS T 1}) in the form
\begin{equation}
\label{expansion T}
    T=T^{(0)}(Z)+T^{(1)}_\alpha(Z)x^\alpha+\frac{1}{2}T^{(2)}_{\alpha\beta}(Z)x^\alpha x^\beta+\frac{1}{3!}T^{(3)}_{\alpha\beta\gamma}(Z)x^\alpha x^\beta x^\gamma+\dots
\end{equation}
where $(x^1,x^2)=(x,y)$ and summation on repeated indices is implied.
Several classes of finite dimensional reductions of (\ref{expansion T}) are possible. Third order truncation of the above series provides a wide class of fully polynomial solutions whose coefficients satisfy
\begin{equation}
\label{T3}
    \frac{d^2\TTT{3}{\alpha\beta\gamma}}{dZ^2}=0,
\end{equation}
\begin{equation}
\label{T2}
    \begin{cases}
    \displaystyle{\frac{d^2\TTT{2}{11}}{dZ^2}+2\det(\TTT{3}{1\alpha\beta})=0},\\
    \displaystyle{\frac{d^2\TTT{2}{12}}{dZ^2}+\TTT{3}{111}\TTT{3}{222}-\TTT{3}{112}\TTT{3}{122}},\\
    \displaystyle{\frac{d^2\TTT{2}{22}}{dZ^2}+2\det(\TTT{3}{2\alpha\beta})=0},
    \end{cases}
\end{equation}
\begin{equation}
    \label{T1}
    \begin{cases}
    \displaystyle{\frac{d^2\TTT{1}{1}}{dZ^2}+\TTT{2}{22}\TTT{3}{111}-2\TTT{2}{12}\TTT{3}{112}+\TTT{2}{11}\TTT{3}{222}=0},\\
    \displaystyle{\frac{d^2\TTT{1}{2}}{dZ^2}+\TTT{2}{22}\TTT{3}{112}-2\TTT{2}{12}\TTT{3}{122}+\TTT{2}{11}\TTT{3}{222}=0},
    \end{cases}
\end{equation}
\begin{equation}
    \label{T0}
    \frac{d^2\TTT{0}{}}{dZ^2}+\det(\TTT{2}{\alpha\beta})=0.
\end{equation}
A straightforward inspection of equations (\ref{T3})--(\ref{T0}) shows that the coefficient functions are polynomials of a definite degree in $Z$. Specifically, $\TTT{3}{\alpha\beta\gamma}$ has degree 1, $\TTT{2}{\alpha\beta}$ has degree 4, $\TTT{1}{\alpha}$ has degree 7, and $\TTT{0}{}$ has degree 10. 
Retaining terms of the fourth order and higher in the expansion (\ref{expansion T}) leads to a wider class of exact solutions, though they are generally not polynomial in the vertical variable.


Polynomial solutions are particularly valuable for their ability to encode the local behaviour of more complex solutions, and, in particular, the singular structure. The simplest nontrivial Lagrangian singularity is the fold ($A_2$), and the germ of a Lagrangian submanifold with this feature is canonically described by a generating function
\begin{equation}
\label{canonical form ex0}
    T^*(x,y,Z)=\frac{Z^3}{6}
\end{equation}
(see for example \cite{AGV2012} for a list of low-dimensional canonical forms of elementary catastrophes).
We build an example solution to (\ref{CS T 1}) by deformation of (\ref{canonical form ex0}) through the addition of a polynomial term. 
One of the simplest choices is
\begin{figure}
    \centering
    \includegraphics[width=0.25\textwidth]{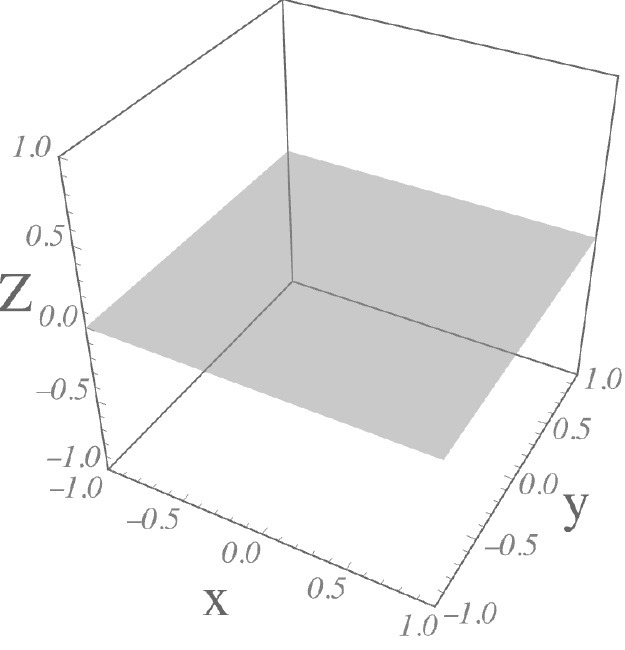}
    \hspace{30pt}
    \includegraphics[width=0.25\textwidth]{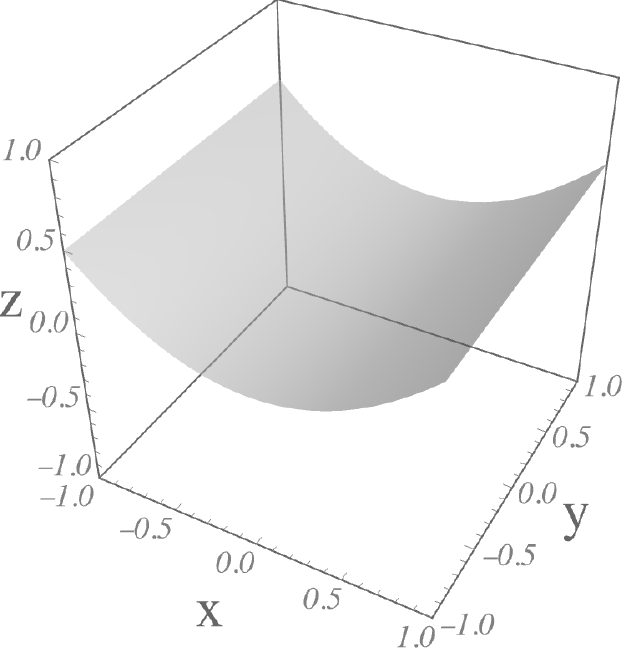}
    \caption{Singular locus and caustics of the generalized solution (\ref{L ex0}) showing the typical appearance of the $A_2$ singularity.}
    \label{fig:ex0_sig_cau}
\end{figure}
\begin{equation}
\label{T ex0}
    T=\frac{y^2}{2}-\frac{x^2 Z}{2}+\frac{Z^3}{6}.
\end{equation}

\subsection{Lagrangian submanifold and projection}
The generating function (\ref{T ex0}) determines a Lagrangian submanifold according to
\begin{align}
\label{L ex0}
    L=\bigg\{(x,y,z,X,Y,Z)\in\cotb{3}:X=\frac{\partial T}{\partial x}=xZ,Y=\frac{\partial T}{\partial y}=y,\\
    z=-\frac{\partial T}{\partial Z}=\frac{x^2}{2}-\frac{Z^2}{2}\bigg\}.
\end{align}
The restriction $\pi_L:=\pi|_L$ of the bundle projection (\ref{Lagrangian projection L}) to $L$ in local coordinates is
\begin{equation}
\label{piL ex0}
    \pi_L(x,y,Z)=\left(x,y,-\frac{\partial T}{\partial Z}\right)=\left(x,y,\frac{x^2}{2}-\frac{Z^2}{2}\right),
\end{equation}
and shows that the fold singularity occurs across the plane
\begin{equation}
\label{cau ex0}
    \Sigma L=\left\{(x,y,Z)\in L: 0=\det(\de\pi_L)=-\frac{\partial^2 T}{\partial Z^2}=-Z\right\}.
\end{equation}
The projection of the singular locus to the physical space identifies the caustics,
\begin{equation}
    \pi(\Sigma L)=\left\{(x,y,z)\in\rn{3}: z=\frac{x^2}{2}\right\},
\end{equation}
and Figure \ref{fig:ex0_sig_cau} provides a view of them.
Due to the nature itself of the fold singularity, the caustics bound the solution domain in the physical space. In other words, the solution is only defined in the domain $z\le x^2/2$ of $\rn{3}(x,y,z)$.
\begin{remark}
A different approach is used in \cite{CS89}, where the authors put the focus on the singularities of the geopotential graph.
The \CS relations (\ref{Legendre transform R})--(\ref{Legendre transform T}) yield the graph of the multivalued geopotential $P(x,y,z)$ as parametrized by $Z$,
\begin{equation}
\label{P ex0}
    \begin{cases}
    \displaystyle{P=\frac{y^2}{2}-\frac{Z^3}{3}},\\
    \displaystyle{z=\frac{x^2}{2}-\frac{Z^2}{2}},
    \end{cases}
\end{equation}
and Figure \ref{fig:ex0_P} shows a section of it for $y=constant$. The cusped edge is a distinctive feature of the $A_2$ singularity in the \textit{Legendrian} context.
\begin{figure}
    \centering
    \includegraphics[width=0.3\textwidth]{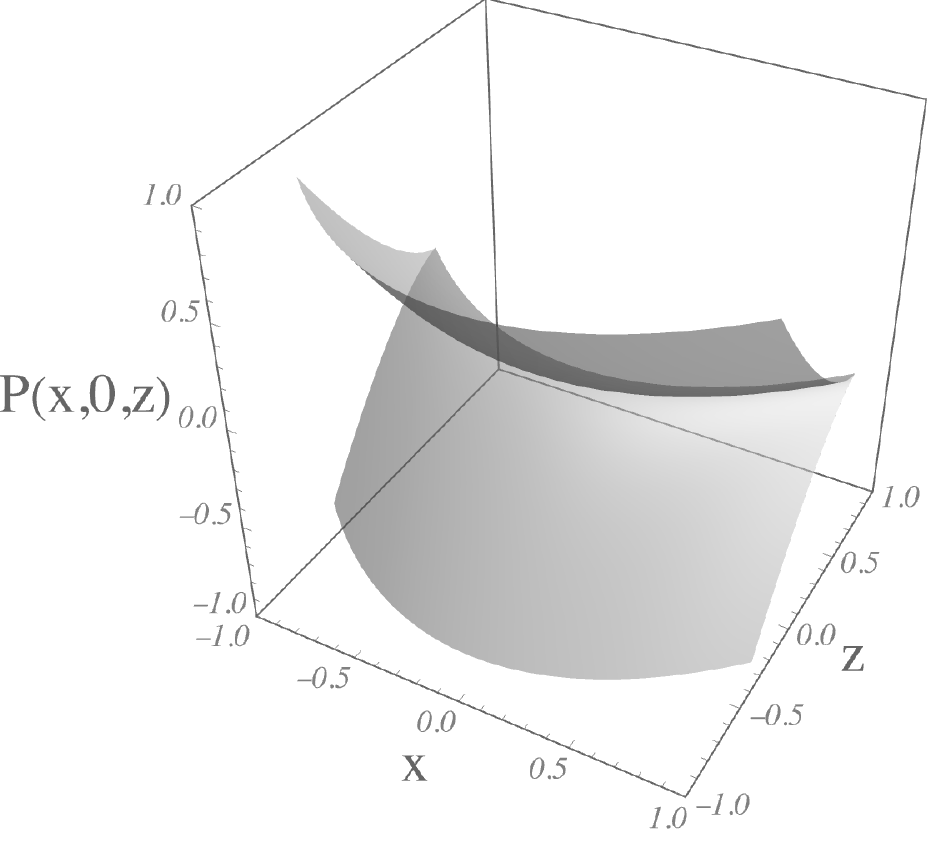}
    \caption{A slice through the graph of the multivalued geopotential (\ref{P ex0}) for $y=0$.}
    \label{fig:ex0_P}
\end{figure}
\end{remark}

\subsection{\LR metric}
Next we examine the pseudo-Riemannian geometry of the solution. The \LR metric on $L$ in local coordinates is
\begin{align}
\label{LR metric ex0}
\begin{split}
    h_\omega&=2(T_{xx}\de x^2+2T_{xy}\de x\de y+T_{yy}\de y^2-T_{ZZ}\de Z^2)\\
    &=2(-Z\de x^2+\de y^2-Z\de Z^2).
    \end{split}
\end{align}
This immediately implies, according to Definition \ref{def: type vs h}, that the problem is elliptic for $Z<0$ and hyperbolic for $Z>0$. Characteristic surfaces in the hyperbolic region are determined by the light-like geodesics of (\ref{LR metric ex0}) (see Remark \ref{remark Lagrangian light rays}). General geodesics satisfy
\begin{equation}
\label{geodesics ex0}
    \ddot x=-\frac{\dot x\dot Z}{Z},\qquad \ddot y=0,\qquad \ddot Z=\frac{\dot x^2-\dot Z^2}{2Z}.
\end{equation}
The first two equations can be immediately integrated once, and yield two of constants of the motion,
\begin{equation}
\label{constants ex0}
    \dot x Z=C_1,\qquad \dot y=C_2.
\end{equation}
Using (\ref{constants ex0}) in the third of (\ref{geodesics ex0}), the problem of finding general geodesics is reduced to the single equation
\begin{equation}
\label{geodesics ex0 Z}
    2Z^3\ddot Z=C_1^2-Z^2\dot Z^2.
\end{equation}
A further simplification is available for
light-like geodesics, which are subject to the additional constraint
\begin{equation}
\label{ll geodesics}
    -Z\dot x^2+\dot y^2-Z\dot Z^2=0.
\end{equation}
Using (\ref{constants ex0}) in (\ref{ll geodesics}) gives the separable equation
\begin{equation}
\label{ll geodesics Z}
    C_1^2-C_2^2Z+Z^2\dot Z^2=0,
\end{equation}
with implicit solution,
\begin{equation}
\label{implicit solution ex0}
    \pm s=\frac{2\sqrt{C_2^2 Z-C_1^2}(2C_1^2+C_2^2 Z)}{3C_2^4}-\frac{2\sqrt{C_2^2 Z_0-C_1^2}(2C_1^2+C_2^2 Z_0)}{3C_2^4}.
\end{equation}
We shall remark at this point that (\ref{ll geodesics}) is compatible with the geodesics equations (\ref{geodesics ex0}), as can be verified by taking a derivative of (\ref{ll geodesics}) and using (\ref{geodesics ex0}) to eliminate the second derivative terms. Indeed, any solution of (\ref{ll geodesics Z}) is a geodesic curve. This can be checked by taking the derivative of (\ref{ll geodesics Z}) with respect to the curve parameter to get
\begin{equation}
    -C_2^2+2Z\dot Z^2+2Z^2\ddot Z=0.
\end{equation}
Then, multiplication by $Z$ and the use of (\ref{ll geodesics}) to eliminate $C_2$ leads back to equation (\ref{geodesics ex0 Z}). Next, we use the first constant of the motion in (\ref{constants ex0}) to get $x(s)$,
\begin{equation}
\label{x(Z) ex0}
    \frac{dx}{dZ}=\frac{C_1}{Z}\frac{ds}{dZ}=\pm\frac{C_1}{\sqrt{C_2^2Z-C_1^2}}.
\end{equation}
Once integrated, equation (\ref{x(Z) ex0}) plus (\ref{implicit solution ex0}) and the second of (\ref{constants ex0}) allows us to write the light-like geodesics as
\begin{equation}
    \begin{cases}
    x\pm\frac{2\dot x_0Z_0}{\dot y_0^2}\sqrt{\dot y_0^2 Z-\dot x_0^2Z_0^2}=constant,\\
    y\pm \frac{2}{3\dot y_0^3}(2\dot x_0^2Z_0^2+\dot y_0^2Z)\sqrt{\dot y_0^2Z-\dot x_0^2Z_0^2}=constant,
    \end{cases}
\end{equation}
where we have used $C_1=\dot x_0Z_0$, $C_2=\dot y_0$.
To see how characteristics interact with the parabolic boundary $\Sigma L$, we assume next that the geodesic starting point $(x_0,y_0,Z_0)$ belongs to $\Sigma L$, that is $Z_0=0$. This results in
\begin{equation}
    \begin{cases}
    x=x_0,\\
    (y-y_0)^2=\frac{4}{9}Z^3.
    \end{cases}
\end{equation}
Thus, characteristics intersecting the parabolic boundary form semicubical cusps at the intersection point, as expected from the literature on the subject \cite{LL87}. Figure \ref{fig:ex0_cha} provides a view of the characteristic surfaces near the singular locus.
\begin{figure}
    \centering
    \includegraphics[width=0.3\textwidth]{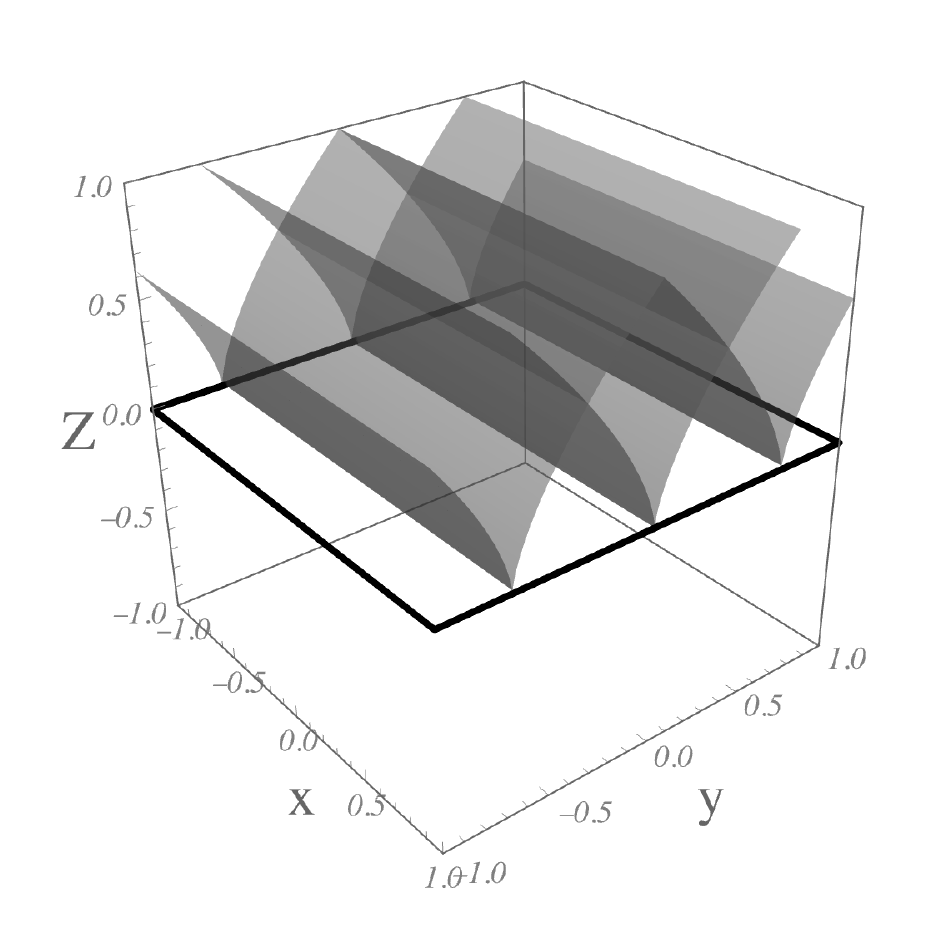}
    \caption{Characteristic surfaces (gray) intersecting the boundary of the hyperbolic region, $Z=0$. Intersection points are found along parallel lines along the parabolic plane $Z=0$ where the characteristics form semicubical cusps.}
    \label{fig:ex0_cha}
\end{figure}

\subsection{Full semigeostrophic solution}
In the remainder of this section we explicitly reconstruct the full semigeostrophic solution from the knowledge of the generating function (\ref{T ex0}). Thanks to the simple structure of the solution (\ref{T ex0}), we are able to explicitly (piecewise) invert the relation (\ref{P ex0}) and write
\begin{equation}
\label{P ex0 xyz}
    \epsilon\theta=Z=\pm\sqrt{x^2-2z},\qquad 
    P=\frac{y^2}{2}\mp\frac{1}{3}(x^2-2z)^{3/2}.
\end{equation}
Once the geopotential $P$ is known, absolute momentum and potential temperature are obtained by derivation as
\begin{equation}
    M=\frac{\partial P}{\partial x}=\mp x\sqrt{x^2-2z},\qquad N=\frac{\partial P}{\partial y}=y.
\end{equation}
Next, the geostrophic wind is found as
\begin{equation}
    u_g=\PV(y-N)=0,\qquad v_g=\PV(M-x)=\PV(\mp x\sqrt{x^2-2z}-x),
\end{equation}
where we have used $\epsilon\PV=1$.
The momentum balance equations plus the transport of potential temperature yield a system of algebraic equations for the unknown components of the velocity field,
\begin{equation}
\label{determining system u v w ex0}
    \begin{cases}
    M_x u+M_y v+M_z w=u_g,\\
    N_x u+N_y v+N_z w=v_g,\\
    \theta_x u+\theta_y v+\theta_z w=0.
    \end{cases}
\end{equation}
Since $N=y$, it easily follows that $v=v_g$. Moreover, $M_y=\theta_y=0$ and $u_g=0$, which imply that $u=w=0$. Indeed, the first and the last equations in (\ref{determining system u v w ex0}) form a linear homogeneous system with nondegenerate coefficient matrix as
\begin{equation}
    \frac{\partial(M,\theta)}{\partial(x,y)}=\hess(P)=1.
\end{equation}
In conclusion, the flow field corresponding to (\ref{T ex0}) is a purely geostrophic meridional wind,
\begin{equation}
\label{velocity field ex0}
    u=0,\qquad v=v_g=\PV(\mp x\sqrt{x^2-2z}-x),\qquad w=0.
\end{equation}
To restore single-valuedness of the solution, Chynoweth and Sewell appealed to the convexity principle of \cite{SC87}. Namely, only convex branches of the multivalued graph of $P$ are retained while concave ones are discarded. Although the application of this principle requires some attention in the general case, it is straightforward in this example. The admissible branch of $P$ is found to be,
\begin{equation}
    P=\frac{y^2}{2}+\frac{1}{3}(x^2-2z)^{3/2},
\end{equation}
which corresponds to the elliptic branch of the multivalued solution, $Z<0$.
This corresponds to the velocity field
\begin{equation}
\label{velocity ex0}
    u=0,\qquad v=v_g=\PV(x\sqrt{x^2-2z}-x),\qquad w=0,
\end{equation}
and represents a geostrophic wind in the northern hemisphere directed poleward. Figure \ref{fig:velocity ex0} shows the wind magnitude on a section normal to the flow.

This example shares qualitative features with a larger class of exact solutions, i.e., 2-dimensional stationary flows. These flows are characterized by the independence of the geopotential $\phi$ of one of the horizontal coordinates (in this case $y$) which results in a vanishing zonal component of the geostrophic wind. Under stationary conditions, flows in this class are purely geostrophic (either zonal or meridional).
\begin{figure}
    \centering
    \includegraphics[width=0.4\textwidth]{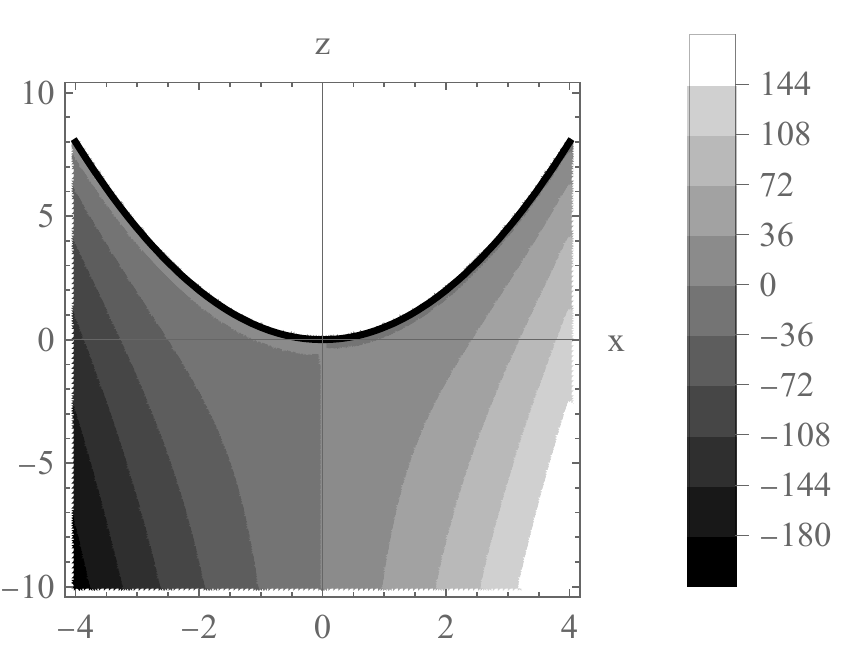}
    \caption{Magnitude of the geostrophic wind (\ref{velocity field ex0}) on a section normal to the flow. The fluid domain is bounded from above by the caustics (\ref{cau ex0}).}
    \label{fig:velocity ex0}
\end{figure}

\section{Conclusions and future directions}
The \LR metric has been much studied in the context of \MA geometry, but its pull-back to generalized solutions, realized as Lagrangian submanifolds, has hitherto been unexplored. We have explored this feature from the point of view of PDE theory in the physically and mathematically important example of the semigeostrophic equations. 

In particular, we have shown connections between the signature of the pull-back metric on solutions, the symbol type of the \MA equation, and its role in describing the characteristic surfaces of hyperbolic equations. We recognise the pull-back metric as a tool for studying singularities,  which complements and extends the earlier work Kossowski \cite{Kos91}, where a version of the \LR metric on $\cotb{2}$ was the primary object of interest. 

Several questions are still open. We illuminated the meaning of the light-like geodesics in hyperbolic regime, but the potential role of space-like and time-like geodesics, and the elliptic regime in this context, remain to be explored.

Another intriguing question is the geometrical and physical meaning of the curvature of the Lagrangian submanifolds, and its relationship with singularities. This aspect has been explored in the work of Napper \textit{et al.} on Navier-Stokes equations, and its implications for semigeostrophic theory is matter for future research. We focused on the kinematic aspects of the SG equations, considering time as a fixed parameter, but the system dynamics is important. Considering time-dependent solutions leads to a 1-parameter family of metrics, i.e., a notional geometric flow, whose properties are unknown. We might speculate a relation between such a geometric flow and the onset of dynamic singularities.

\appendix


\section*{Acknowledgements}
We would like to thank T. Bridges, L. Napper, and M. Wolf for many useful discussions.
R.D. and G.O. were supported by the European Union’s Horizon 2020 research and innovation program under the Marie Skłodowska-Curie grant no 778010 IPaDEGAN. R.D. and G.O. thank the financial support of the
project MMNLP (Mathematical Methods in Non Linear Physics) of the
INFN.
R.D. and G.O. also gratefully acknowledge the auspices of the GNFM Section of INdAM under which part of this work was carried out. This work is part of R.D.'s dual PhD program Bicocca-Surrey.

\section*{Data and Licence Management}

No additional research data beyond the data presented and cited in this work are needed to validate the research findings in this work. For the purpose of open access, the authors have applied a Creative Commons Attribution (CC BY) licence to any Author Accepted Manuscript version arising.

\section{Characteristics of linear PDEs}
\label{appendix: characteristics}
We recall here some classical terminology from the theory of linear PDEs (see for example \cite{CH62}). Let a second order linear PDE in $n$ independent variables $x=(x_1,\dots,x_n)$ have principal part
\begin{equation}
    \sum_{i,j=1}^n a_{ij}(x)\frac{\partial^2 u}{\partial x_i\partial x_j}.
\end{equation}
The $x$-depending quadratic form
\begin{equation}
    \sigma(x,\mathbf{\xi})=\sum_{i,j=1}^n a_{ij}(x)\xi_i\xi_j,\qquad \mathbf{\xi}:=(\xi_1,\dots,\xi_n),
\end{equation}
is called the \textit{principal symbol} of the equation. A vector $\mathbf\xi$ based at $x$ is called \textit{characteristic} if $\sigma(x,\mathbf{\xi})=0$ and the set of characteristic vectors at $x$ is called the \textit{Fresnel cone}. An implicitly defined hypersurface $F(x_1,\dots,x_n)=0$ is called \textit{characteristic surface} or simply \textit{characteristic} if its normal vector is characteristic. In other words, $F$ satisfies the \textit{eikonal} equation
\begin{equation}
\label{eikonal Rn}
    \sigma(x,\nabla F)=\sum_{i,j=1}^n a_{ij}(x)\frac{\partial F}{\partial x_i}\frac{\partial F}{\partial x_j}=0.
\end{equation}


\begin{thebibliography}{9}

\bibitem{Arn78}
V. I. Arnold (1978) \textit{Mathematical Methods of Classical Mechanics}, Springer, New York




\bibitem{AGV2012}
V. I. Arnold, S. M. Gusein-Zade and A. N. Varchenko (2012) \textit{Singularities of Differentiable Maps, Volume 1, Classification of Critical Points, Caustics and Wave Fronts}, Birkhäuser Basel

\bibitem{Ban2002}
B. Banos (2002) ``Nondegenerate Monge–Ampère Structures in Dimension 6'', \textit{Letters in Mathematical Physics}, \textbf{62}, 1--15


\bibitem{BT97}
H. R. Birkett \& A. J. Thorpe (1997) ``Superposing semi-geostrophic potential-vorticity anomalies'', \textit{Q. J. R. Meteorol. Soc.}, \textbf{123}(543), 2157--2163

\bibitem{BRR2016}
B. Banos, V. N. Roubtsov \& I. Roulstone (2016) ``Monge–Ampère structures and the geometry of incompressible flows'', \textit{J. Phys. A: Math. Theor.}, \textbf{49}, 244003



\bibitem{CS89}
S. Chynoweth \& M. J. Sewell (1989) Dual variables in semigeostrophic theory, \textit{Proc. R. Soc. Lond. A}, \textbf{424}, 155--186




\bibitem{CH62}
R. Courant \& D. Hilbert (1962) \textit{Methods of mathematical physics. / Volume II, Partial differential equations}, Wiley Interscience, New York

\bibitem{CP84}
M. J. P. Cullen \& R. J. Purser (1984) ``An extended Lagrangian theory of semigeostrophic frontogenesis'', \textit{J. Atmos.
Sci.}, \textbf{41}, 1477--97


\bibitem{CR93}
M. J. P. Cullen \& I. Roulstone (1993) ``A Geometric Model of the Nonlinear Equilibration of Two-Dimensional Eady Waves'', \textit{J. Atmos. Sci}, \textbf{50}(2), 328--332

\bibitem{DR2010}
S. Delahaies \& I. Roulstone (2010) ``Hyper-Kähler geometry and semi-geostrophic theory'', \textit{Proc. R. Soc. A.}, \textbf{466}, 195--211



\bibitem{Duz2004}
S.V. Duzhin (2004) ``Infinitesimal Classification of Systems of Two First Order Partial Differential Equations in Two Variables'', \textit{J. Math. Sci}, \textbf{119}, 30--34

\bibitem{EN2000}
J. Ehlers \& E. T. Newmann (2000) ``The theory of caustics and wave front singularities with physical applications'', \textit{J. Math. Phys.}, \textbf{41}, 3344



\bibitem{HL82}
R. Harvey \& H. Lawson (1982) ``Calibrated geometry'', \textit{Acta Mathematica}, \textbf{148}, 47--157

\bibitem{HS90}
M. W. Holt \& G. J. Shutts (1990) ``An analytical model of the growth of a frontal discontinuity'', \textit{Q. J. R. Meteorol. Soc.}, \textbf{116}, 269--286

\bibitem{HB72}
B. J. Hoskins \& F. P. Bretherton (1972) ``Atmospheric Frontogenesis Models: Mathematical Formulation and Solution'', \textit{J. Atmos. Sci.}, \textbf{29}, 11--37

\bibitem{Hos75}
B. J. Hoskins (1975) ``The Geostrophic Momentum Approximation and the Semi-Geostrophic Equations'', \textit{J. Atmos. Sci.}, \textbf{32}, 233--242


\bibitem{IM2006a}
G. Ishikawa \& Y. Machida (2006) ``Singularities of improper affine spheres and surfaces of
constant Gaussian curvature'', \textit{International J. Math.}, \textbf{17}(3), 269--293


\bibitem{IM2006b}
G. Ishikawa and Y. Machida (2006) ``Extra singularities of geometric solutions to Monge-Ampère equation of three variables'', \textit{Kyoto Univ. Res. Inf. Repos.}, \textbf{1502}, 41--53


\bibitem{Kos91}
M. Kossowski (1991) ``Local Existence of Multivalued Solutions to Analytic Symplectic Monge-Ampère Equations (The Nondegenerate and Type Changing Cases)'', \textit{Indiana University Mathematics Journal}, \textbf{40}(1), 123--148

\bibitem{KLR2006}
A. Kushner, V. Lychagin \& V. Rubtsov (2006) \textit{Contact Geometry and Nonlinear Differential Equations}, Cambridge University Press

\bibitem{LL87}
L. D. Landau \& E. M. Lifshitz (1987) \textit{Fluid Mechanics -- Course of Theoretical Physics, Volume 6}, 2nd Edition, Pergamon Press


\bibitem{Lyc85}
V. Lychagin (1985) ``Singularities of multivalued solutions of nonlinear differential equations, and nonlinear phenomena'', \textit{Acta Appl. Math.}, \textbf{3}, 135--173

\bibitem{LR83}
V. V. Lychagin \& V. N. Rubtsov (1983) ``Local classification of Monge–Amp`ere differential equations'', \textit{Dokl. Akad. Nauk SSSR}, \textbf{272}(1), 34--38





\bibitem{LYZ2005}
J. Loftin, S.-T. Yau \& E. Zaslow (2005) ``Affine manifolds, SYZ geometry and the “Y” vertex'', \textit{J. Differential Geom.}, \textbf{71}(1), 129--158






\bibitem{NRR2022}
L. Napper, I. Roulstone, V. Rubtsov \& M. Wolf (2022) ``\MA geometry and vortices'' (to appear)




\bibitem{Oli2006}
M. Oliver (2006) ``Variational asymptotics for rotating shallow water near geostrophy: a transformational approach'', \textit{J. Fluid Mech.}, \textbf{551}, 197--234


\bibitem{RBG2009}
I. Roulstone, B. Banos, J. D. Gibbon \& V. N. Roubtsov (2009) ``A Geometric Interpretation of Coherent Structures in Navier-Stokes Flows'', \textit{Proceedings: Mathematical, Physical and Engineering Sciences}, \textbf{465}(2107), 2015--2021



\bibitem{RN2006}
I. Roulstone \& J. Norbury (2006) ``A Hamiltonian structure with contact geometry for the semi-geostrophic equations'', \textit{J. Fluid Mech.}, \textbf{272}, 211--234



\bibitem{Shu91}
G. Shutts (1991) ``Some exact solutions to the semi-geostrophic equations for uniform potential vorticity flows'', \textit{Geophys. Astrophys. Fluid Dyn.}, \textbf{57}(1-4), 99--114

\bibitem{SC87}
G. J. Shutts \& M. J. P. Cullen (1987) ``Parcel Stability and its Relation to Semigeostrophic Theory'', \textit{J. Atmos. Sci}, \textbf{44}(9), 1318--1330

\bibitem{VK77}
A. M. Vinogradov \& B. A. Kupershmidt (1977) ``The Structures of Hamiltonian Mechanics'', \textit{Russ. Math. Surv.}, \textbf{32}, 177

\end{thebibliography}
\end{document}